\author{}
\newtheorem{theorem}{Theorem}[section]
\newtheorem{proposition}{Proposition}[section]
\newtheorem{remark}{Remark}[section]
\begin{document}

\large

\title{Systemic Risk with Exchangeable Contagion: \\ Application to the European Banking System
\thanks{ The authors would like to thank Robin Treber for excellent research assistance. We also thank participants in the Conference on High-Dimensional Dependence and Copula in Bejing, in the Workshop on Dependence Models and Risk, in Bozen, and in the Conference on Systemic Risk and Contagion at the University of Bologna for useful comments.}}
\author{Umberto Cherubini\footnote{Corresponding author: umberto.cherubini@unibo.it.}, Sabrina Mulinacci
\\ University of Bologna - Department of Statistics}

\date{}
\maketitle

\begin{abstract} We propose a model and an estimation technique to distinguish systemic
risk and contagion in credit risk. The main idea is to assume, for
a set of $d$ obligors, a set of $d$ idiosyncratic shocks and a
shock that triggers the default of all them. All shocks
are assumed to be linked by a dependence relationship, that in
this paper is assumed to be exchangeable and Archimedean. This
approach is able to encompass both systemic risk and contagion,
with the Marshall-Olkin pure systemic risk model and the
Archimedean contagion model as extreme cases. Moreover, we show
that assuming an affine structure for the intensities of
idiosyncratic and systemic shocks and a Gumbel copula, the
approach delivers a complete multivariate distribution with
exponential marginal distributions. The model can be estimated by
applying a moment matching procedure to the bivariate marginals.
We also provide an easy visual check of the good specification of
the model. The model is applied to a selected sample of banks for
8 European countries, assuming a common shock for every country.
The model is found to be well specified for 4 of the 8 countries.
We also provide the theoretical extension of the model to the
non-exchangeable case and we suggest possible
avenues of research for the estimation.\\

\textbf{Keywords}: Credit risk, Systemic risk, Contagion, Copula
functions, Marshall-Olkin distribution, Financial crisis

\end{abstract}

\section{Introduction}
The purpose of this paper is to draw a line between systemic risk
and contagion, and to design a method to measure the relative
contribution of contagion and systemic risk to the dependence structure of a set of credit
exposures. The statistical problem of disentangling systemic risk and contagion is of utmost relevance for economic policy.
In several problems, such as pollution regulation or banking, recognizing systemic risk, as an event independent of the agents, or contagion,
that is a system wide event triggered by one of them, is a major discriminant factor to decide whether the effects should be charged to the
community at large or to the individual agents.\\
The problem is involved for two reasons. The first, already
addressed in many studies on the subject, is that the systemic
risk factor is not observed and for credit risk applications we
are only able to extract the marginal survival probabilities from
the market. The second reason, that is the subject of this
paper, is whether such dependence is explained by the presence of a systemic risk factor only, or of some infectious elements in the system.\\
To make the problem clear, assume that we are allowed to observe
the systemic risk factor, and we are able to appraise the
probability of a systemic crisis. In this situation, the question
would naturally arise whether the systemic shock is independent of
the other events triggering the default of each component of the
set. Answering this question on practical grounds would be
obviously easy in this setting, and the dependence between
idiosyncratic and systemic triggers of default could be estimated
in the
usual way, e.g. using copulas.\\
Notice, however, that even if dependence with the systemic risk
factor were observed, a problem of interpretation of these results
on theoretical grounds would arise concerning how it would affect
the observed dependence structure of the components in the system.
In fact, the presence of a systemic factor is sufficient to induce
dependence among the components of the system and between each
component and the systemic event. In other words, this dependence
shows up even if the systemic shock is independent of the
idiosyncratic ones. Intuitively, if the idiosyncratic default
drivers were linked to the systemic risk trigger by a dependence
relationship, the degree of dependence in the system would be even
stronger. \\
In this paper we propose a model to represent these two sources of
dependence in a tractable way. The idea of our model is very
simple. Given a cluster of $d$ obligors, we assume that the system
is subject to a set of $d+1$ shocks, one of which is common to all
the components and leads to simultaneous default of all the
obligors in the cluster, while the others are responsible for the
default of each component. Assume further that the times of these
shocks are linked by a copula function of dimension $d+1$. It is
immediate to see that this model encompasses the two extreme cases
of pure systemic risk and pure contagion. Namely, in the case of a
product copula for the $d+1$ occurrence times of the shocks, we
obtain a model with bivariate Marshall-Olkin marginal
distributions, representing the pure systemic risk model. The
opposite case arises when the systemic shock has zero probability,
so that we have a standard survival copula model corresponding to
pure contagion. Allowing for positive probability of a systemic
crisis and for dependence between this risk and the idiosyncratic
credit drivers would then allow to design models that represent
both systemic risk and contagion. In these models, the task of
disentangling the two
is a relevant question.\\
In its simplest version, our paper assumes the standard
restriction of the choice of exchangeable copulas for the credit
risk drivers. This means that each idiosyncratic factor is assumed
to be linked by the same dependence structure to the systemic risk
factor and to the other idiosyncratic ones. We also show that
further restrictions change the copula model in a multivariate
distribution model with exponential margins. If the model may seem
restrictive, on practical grounds we provide a methodology to
verify if the assumption is borne out by the data. Moreover, on
theoretical grounds, we will also provide the theoretical
development
for the non-exchangeable version of model. \\
The plan of the paper is as follows. After reviewing the relevant
literature, in Section \ref{model} we motivate and describe in full
generality our credit risk model for a basket of issuers, we
discuss its main properties and the restrictions that change the
copula model in a multivariate distribution with exponential
marginals. In Section \ref{extension} we discuss the theoretical features of the
extension to non-exchangeable dependence of the credit risk
drivers. Finally, in Section \ref{sector} we illustrate our application to
the banking system of a set of countries of the Euro area. In
Section \ref{conclusion} we report conclusions and a discussion of the main
issues left for future research.
\subsection{Related literature}
Our paper is related to a large literature on the measurement of systemic risk and contagion, even though to the best of our knowledge it is the first attempt to disentangle the two. Leaving aside any hope of being exhaustive, we may provide a taxonomy of the main contributions according to the structure of models and the data used.\\
As for the methodology involved, a first class of models are based
on the application of Granger causality, and related concepts, to
the prices of financial assets (Billio et al.,2012). A second set
of models is based on the network representation of the
relationships among financial institutions (Diebold and Yilmaz,
2011). A third approach is based on the theory of risk measures
applied to systemic risk and contagion. Models in this class are
based on the measurement of expected losses conditional on an
extreme scenario of some systemic risk factor. The technique is
the same as expected shortfall, with the difference of
conditioning with respect to a systemic variable. These measures
are called Marginal Expected Shortfall, MES (Acharya et al.,
2010), and CoVaR (Adrian and Brunnermeier, 2011). Cherubini and
Mulinacci (2014) give conditions to ensure that \emph{coherence}
requirements be met, and propose examples of measures in this
class based on copula functions.\\ Coming to the kind of data that
are used in the empirical analysis, we may distinguish between
applications that rely on the analysis of market prices, and those
that use flows and balance sheet data. The first choice use equity
stock prices (Billio at al., 2012) or volatilities (Diebold and
Yilmaz, 2011), credit spreads of bonds and credit derivatives
(Baglioni and Cherubini, 2013). With this choice, the focus is on
measurement of the effects of systemic risk and contagion, in
terms of future cash flows and the default probability that are
implied in market quotes. The second choice exploits flows among
the financial intermediaries and the focus is more on the means
that explain propagation of the shocks through
the financial intermediation system. Here the analysis is focussed on flows in the interbank market (Bonaldi, Hortacsy and Kastl, 2013) or on several layers representing other markets (Bargigli et al., 2013), or else on balance sheet indexes such as leverage (Brownlees and Engle, 2010). All these proxies are used as measures of the strength of contagion in the system.\\
Our paper uses the default probability extracted from CDS and their dependence structure in order to recognize how much of this dependence is
due to relationships among the components of the system, as in network based models, and how much of the co-movement is due to the presence of a systemic risk factor,
as in systemic risk models. Moreover, to the best of our knowledge, this is the first attempt to include a dependence structure between each component and the systemic shock, although being the same dependence structure in both cases. \\

\section{The model}\label{model}
Here we introduce the motivation of the model, and its basic
setting. The idea is that in a system of $d$ components, the
lifetime of each of them can come to an end either for
idiosyncratic or systemic shocks, as in a standard Marshall-Olkin
setting. Differently from that model, in which all shocks are
assumed to be independent, here the idiosyncratic components are
infectious. Idiosyncratic defaults can be associated, and they may
also represent triggers of the systemic shock, leading to default
of the whole
system.\\
Technically, let $(\Omega,\mathcal F,\mathbb P)$ be a probability
space with a $d+1$-vector, $(X_0,X_1,\ldots ,X_d)$ whose
components have $[0,+\infty)$ as support. $X_0$ denotes the
arrival time of the systemic shock and $(X_1,\ldots ,X_d)$ are
those of the idiosyncratic ones. We assume that the joint survival
dependence structure is represented by a strict Archimedean
copula, that is
$$\bar F(x_0,x_1,\ldots ,x_d)=\psi\left (\psi^{-1}(\bar F_0(x_0))+\cdots +\psi^{-1}(\bar F_d)\right )$$
for $(x_0,\ldots ,x_d)\in[0,+\infty )^{d+1}$, where $\bar F_i$
(that is assumed to be continuous and strictly decreasing) is the
marginal survival function of $X_i$ and $\psi$ is the generator of
a strict $d+1$-dimensional Archimedean copula. We recall that
$\psi$ is the generator of a $d+1$-Archimedean copula if and only
if $\psi:[0,+\infty )\rightarrow [0,1]$ is $d+1$-monotone on
$[0,+\infty )$ that is
\begin{itemize}
 \item it is differentiable on $(0,+\infty)$ up to order $d-1$ and the derivatives satisfy
 $(-1)^k\psi^{(k)}(x)\geq 0$ for $k=0,1,\ldots ,d-1$ and $x\in (0,+\infty)$
 \item $(-1)^{d-1}\psi^{(d-1)}$ is non-increasing and convex in $(0,+\infty)$.
\end{itemize}
(see McNeil and Ne\v{s}lehov\'{a}, 2009, for more details on multidimensional Archimedean copulas).\medskip

Since we restrict ourselves to the strict case, we assume
$\psi(x)>0$ for all $x\in [0,+\infty)$.
Let us define
$$\tau_k=\min \{X_0,X_k\},\, k=1,\ldots ,d.$$
This is the standard Marshall-Olkin setting in which the only
common shock taken into account is the one affecting all the
components in the set. Of course, other specifications are
possible, including models with more than one common shock,
affecting selected subsets of the components (see, for all,
Durante, Hofert and Scherer, 2010). The observed default times
$\tau _k$ represent the first arrival time between a common
(systemic) shock affecting all the system and the idiosyncratic
shocks.   We then add an Archimedean type of dependence among the
arrival times of the shocks, in order to represent contagion.
\medskip

The joint survival function of the random vector ${\bf
\tau}=(\tau_1,\ldots ,\tau _d)$ can be easily recovered
\begin{equation}\label{joint}
 \bar F_{{\bf \tau}}(t_1,\ldots ,t_d)=\psi\left (\psi^{-1}(\bar F_0(\max_{1\leq k\leq d}\{t_k\}))+\sum_{k=1}^d\psi^{-1}(\bar F_k(t_k))\right )
\end{equation}
for $t_1,\ldots ,t_d\in [0,+\infty )^d$, while the marginal
survival functions are
\begin{equation}\label{marginal}
 \bar F_{\tau_k}(t)=\psi\left (\psi^{-1}(\bar F_0(t))+\psi^{-1}(\bar F_k(t))\right )=\psi\left (H_{0,k}(t)\right ),\, t\in [0,+\infty)
 \end{equation}
 where $H_{0,k}(x)=\psi^{-1}(\bar F_0(x))+\psi^{-1}(\bar
 F_k(x))$.\\
It is also easy to extract the copula function of the observed
default times
\begin{proposition}\label{Mulinacci}
The survival copula $\hat C$ of the vector of default times ${\bf
\tau}$ is, for ${\emph{\bf u}}\in [0,1]^d$,
\begin{equation}\label{copula}
\hat C({\emph{\bf u}})=\sum_{j=1}^d\psi\left
(\psi^{-1}(u_j)+\sum_{k=1,k\neq j}^d D_k\circ \psi^{-1}(u_k)\right
)
 {\bf 1}_{A_j}({\emph{\bf u}})
\end{equation}
where $D_k(x)=\psi^{-1}\circ \bar F_k \circ H_{0,k}^{-1}(x)$ and
$$A_j=\left\{{\emph{\bf u}}\in[0,1]^d:\max_{1\leq i\leq d}\{ H_{0,i}^{-1}\circ \psi^{-1}(u_i)\}=
H_{0,j}^{-1}\circ \psi^{-1}(u_j)\right \}$$ with the convention
that if ${\emph{\bf u}}$ satisfies the required condition for more
than one index $j$, it is assumed to belong to the $A_j$ with the
smallest index $j$.
\end{proposition}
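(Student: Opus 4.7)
The plan is to compute the survival copula directly from its definitional identity $\bar F_{\bm\tau}(t_1,\ldots,t_d)=\hat C(\bar F_{\tau_1}(t_1),\ldots,\bar F_{\tau_d}(t_d))$, using the explicit forms (\ref{joint}) and (\ref{marginal}). The idea is to invert the marginal relations and then handle the max term in the joint survival by partitioning $[0,1]^d$ according to which coordinate realises that max.

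First I would observe that each $H_{0,k}$ is continuous and strictly increasing on $[0,+\infty)$ (since $\psi^{-1}$ is strictly decreasing and each $\bar F_i$ is continuous and strictly decreasing), so $H_{0,k}^{-1}$ is well-defined, and hence each marginal $\bar F_{\tau_k}(t)=\psi(H_{0,k}(t))$ is continuous and strictly decreasing. Setting $u_k=\bar F_{\tau_k}(t_k)$ I can invert to obtain $t_k=H_{0,k}^{-1}(\psi^{-1}(u_k))$, and consequently
$\psi^{-1}(\bar F_k(t_k))=\psi^{-1}\circ\bar F_k\circ H_{0,k}^{-1}(\psi^{-1}(u_k))=D_k(\psi^{-1}(u_k))$.

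Next I would locate the maximum. Because all the functions $t\mapsto H_{0,k}^{-1}(\psi^{-1}(u_k))$ are monotone in the appropriate direction, $\max_k t_k = t_j$ is equivalent to $\max_{1\leq i\leq d} H_{0,i}^{-1}(\psi^{-1}(u_i)) = H_{0,j}^{-1}(\psi^{-1}(u_j))$, i.e.\ to ${\bm u}\in A_j$. On $A_j$, the argument of $\psi$ in (\ref{joint}) becomes
$$\psi^{-1}(\bar F_0(t_j))+\sum_{k=1}^d\psi^{-1}(\bar F_k(t_k))=\bigl[\psi^{-1}(\bar F_0(t_j))+\psi^{-1}(\bar F_j(t_j))\bigr]+\sum_{k\neq j}\psi^{-1}(\bar F_k(t_k)),$$
and the bracketed term is exactly $H_{0,j}(t_j)=\psi^{-1}(u_j)$, while the remaining terms are $D_k(\psi^{-1}(u_k))$ by the previous step. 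This gives the formula (\ref{copula}) on $A_j$.

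The only subtle point is the tie-breaking convention: when the max is attained at several indices $j_1,j_2$, the smallest-index rule must be shown to produce the same value as any other choice, so that $\hat C$ is truly a function on all of $[0,1]^d$. I would check this by noting that if $t_{j_1}=t_{j_2}$, then $\psi^{-1}(\bar F_0(t_{j_1}))=\psi^{-1}(\bar F_0(t_{j_2}))$; unwinding the definition of $D_k$, one sees that the two candidate expressions for $\hat C({\bm u})$ both equal $\psi\bigl(\psi^{-1}(\bar F_0(t_{j_1}))+\sum_{k=1}^d\psi^{-1}(\bar F_k(t_k))\bigr)$, so the convention is consistent and the proposition follows. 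The main (modest) obstacle is really just bookkeeping around the max and the $A_j$'s; everything else is direct substitution.
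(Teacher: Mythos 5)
Your proposal is correct and follows essentially the same route as the paper's own proof: invert the marginal relation to get $t_k=H_{0,k}^{-1}\circ\psi^{-1}(u_k)$, substitute into (\ref{joint}), and on each $A_j$ absorb the $\bar F_0$ term into $H_{0,j}(t_j)=\psi^{-1}(u_j)$. The extra remarks on monotonicity of $H_{0,k}$ and on the consistency of the tie-breaking convention are correct minor additions that the paper leaves implicit.
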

\begin{proof} See Appendix \ref{appendix}.
\end{proof}

\subsection{A multivariate distribution with contagion and exponential
marginals}\label{exponential} In practical applications it is
common to represent and calibrate default times by exponential
distributions
\begin{equation}\label{Muli3}\bar F_{\tau_k}(x)=\exp(-\mu_kx),\end{equation} where $\mu_k$ denotes the
intensity parameter. We now discuss which restrictions can be
imposed on the model in order to transform the copula model above
in a multivariate distribution with exponential marginals.

Starting from the copula model illustrated, constructing such
multivariate distribution would imply a data generating process
such that: i) the distortion functions $D_k(x)$ are linear; ii)
the dependence is represented by a Gumbel copula.

\subsubsection{Linear distortion} A possible assumption about
functions $\psi^{-1}(\bar F_i(x))$, in the spirit of the paper by
Muliere and Scarsini (1987), is that they are all proportional to
the same function $K(x)$: that is, $\psi^{-1}(\bar F_i(x))=\lambda
_iK(x)$ for $\lambda _i>0$, for $i=0,1,\ldots ,d$. This is
equivalent to $D_i(x)=(1-\alpha _i)x$ where 
$$\alpha
_i=\frac{\lambda _0}{\lambda _i+\lambda _0}\in [0,1)$$ and the
obtained copula is independent of $K$.\\In the more specific case
in which $\psi$ is completely monotone (that is $\psi$ is the
Laplace transform of some positive random variable), we recover
the Scale-Mixture of Marshall-Olkin distributions and copula
models (SMMO) studied in Li (2009). The exchangeable case of SMMO
model is studied in Mai and Scherer (2013) where it is applied to
the pricing of CDOs.

\subsubsection{The Gumbel case} A further restriction to yield marginal exponential distributions is to consider the case in
which $\psi$ is the Gumbel generator, that is
$\psi(x)=e^{-x^{\frac 1\theta}}$, $\theta\geq 1$. Now, equations
(\ref{joint}), (\ref{marginal}) and (\ref{copula}) take the form
$$\bar F_{{\bf \tau}}(t_1,\ldots ,t_d)=\exp\left\{-\left (\lambda _0K\left (\max_{1\leq i\leq d}\{t_i\}\right )+
\sum_{k=1}^d\lambda _kK(t_k)\right )^{\frac 1\theta}\right\}$$

\begin{equation}\label{Mulinacci2}\bar F_{\tau _k}(t)=\exp\left (-(\lambda_0+\lambda_k)^{\frac
1\theta} K^{\frac 1\theta}(t)\right )\end{equation}

$$\hat C({\emph{\bf u}})=\sum_{j=1}^d\exp\left\{-\left [(-\ln u_j)^\theta+\sum_{k=1,k\neq j}^d(1-\alpha _k)(-\ln u_k)^\theta\right ]^{\frac 1\theta}\right\}
{\bf 1}_{A_j}({\emph{\bf u}})$$ Notice that, in this case, $\psi$
is the Laplace transform of an $\frac 1\theta$-stable distributed
random variable and so it represents a specification of the SMMO
model of Li (2009).\bigskip

Notice that setting $K(t)=t^\theta$ in (\ref{Mulinacci2}) yields
exponential marginals as required
\begin{equation}\label{Marg_Int}\mu_k =
(\lambda_0+\lambda_k)^{\frac 1\theta} \end{equation} where $\mu_k$
is the intensity in equation (\ref{Muli3}).

\subsection{Properties of the model}

The main feature of our model, right from the most general
setting, is to increase the degree of dependence among the default
times, both with respect to the standard Archimedean copula
without any systemic risk factor and the Marshall-Olkin copula in
which the systemic risk factor is independent of the others.

The dependence structure of the model encompasses both the
sensitivity of the default times to the systemic shock, and the
dependence among the shocks, represented by Archimedean copulas.
Both these elements interact to determine the dependence among
default times.

In the general setting, the Kendall's tau $\tau _{i,k}$ measuring the dependence of the pair of default times $(\tau _i,\tau _k)$
can be written as
$$\tau_{j,k}= \tau^{\psi}+ 4\int_0^{\infty}(\psi^\prime (x))^2\cdot T(x)dx $$
where $\tau^{\psi}$ denotes the Archimedean Kendall's tau corresponding to the generator $\psi$ and
$$T(x) =\psi^{-1}\circ \bar F_0 \circ \left (\psi^{-1}\circ \bar F_0+\psi^{-1}\circ \bar F_j+\psi^{-1}\circ \bar F_k \right)^{-1}(x) $$
where we refer the reader to Mulinacci (2014) for the derivation.
Notice that if we are interested in representing the dependence
structure between the systemic shock and default times, we have that the Kendall's tau $\tau_{j,0}$
of the pair $(\tau _j, X_0)$ is
$$\tau_{j,0}= \tau^{\psi}+ 4\int_0^{\infty}(\psi^\prime (x))^2\cdot \left (\psi^{-1}\circ \bar F_0 \circ \left (\psi^{-1}\circ \bar F_0+\psi^{-1}\circ \bar F_j \right)^{-1}(x) \right )dx $$
The first term is simply the Kendall's tau of the Archimedean copula
used in the analysis, while the other term, that is more complex,
involves both the generator of the Archimedean copula and the
relative relevance of systemic and idiosyncratic shocks.\\
In the multivariate distribution arising with linear distortions
and the Gumbel copula in the model, these relationships simplify
substantially. In fact, let $\hat C_{j,k}(u,v)$ be the general
marginal $2$-copula,
$$\begin{aligned}\hat C_{j,k}(u,v)&=\exp\left\{-\left [(-\ln u)^\theta+(1-\alpha _k)(-\ln v)^\theta\right ]^{\frac 1\theta}\right\}
{\bf 1}_{\left\{\alpha_j\psi^{-1}(u)\geq
\alpha_k\psi^{-1}(v)\right\}}+\\
&+\exp\left\{-\left [(1-\alpha _j)(-\ln u)^\theta+(-\ln
v)^\theta\right ]^{\frac 1\theta}\right\} {\bf
1}_{\left\{\alpha_j\psi^{-1}(u)< \alpha _k\psi^{-1}(v)\right\}}
\end{aligned}$$
Since this family of copulas represents a particular specification
of the Archimax copulas of Cap\'{e}ra\`{a} et al. (2000) and of
the Archimedean-based Marshall-Olkin copulas of Mulinacci (2014),
its Kendall's tau is known to be
\begin{equation}\label{kendall}
\tau_{j,k}=\frac{\theta
-1}\theta+\frac{\tau^{MO}_{j,k}}{\theta}
\end{equation}
where
$$\tau^{MO}_{j,k}=\frac{\alpha_j\alpha_k}{\alpha
_j+\alpha_k-\alpha_j\alpha_k}$$ is the Kendall's tau of the
Marshall-Olkin copula.

Now, the dependence between each default time and the time of a
systemic shock is linear
\begin{equation} \label{Cheru}\tau_{0,j} =\frac{\theta -1}\theta + \frac{\alpha_j}{\theta} \end{equation}
This relationship will be used in our estimation strategy in order to
verify the specification of the model.

\subsection{Estimation strategy}\label{Estima}
In the estimation of the model we assume to observe a panel set of
data $\mu_k(t_i)$, representing marginal default intensities of
$k=1,2,\ldots ,d$ components, for $\{t_1,t_2,\ldots ,t_m\}$ dates.
Our task is to estimate the set of $\alpha_k$ parameters,
representing the sensitivity of each obligor to the systemic
shock, and the parameter $\theta$, that measures the degree of
contagion in the system. We also would like to make a check of the
specification of the model.

Since the main feature of our approach is to identify the weight
of the sensitivity to the systemic shock and of the degree of
contagion in the dependence structure of default times, a natural
estimation strategy would be a moment based approach, which
resembles the calibration procedure proposed by Genest and Rivest
(1993). In particular, our model specification based on linear
distortions and Gumbel dependence makes a procedure based on
Kendall's tau calibration very easy.

Since the model is built to be fully characterized by the
bivariate marginals, the estimation is naturally performed by
calibrating the bivariate Kendall's tau statistics of the system.
For each cluster that we expect to be part of the same
exchangeable system, consisting of $d$ units, we calibrate the set
of $d+1$ parameters of our model.

Formally, we first estimate the Kendall's tau statistics of all the
pairs of the sample, and then estimate the set of parameters
${\bf\Theta}= \{\alpha_1,\alpha_2,\ldots,\alpha_d,\theta \}$ by
solving

$${\bf\hat\Theta} = \underset{\{\alpha_1,\alpha_2,\ldots,\alpha_d,\theta
\}}{\operatorname{argmin}} \sum_{i=1}^{d-1}\sum_{j=i+1}^{d} dist(\hat
\tau_{i,j},\tau_{i,j}( \alpha_i, \alpha_j, \theta)) $$ where
$dist(x,y)$ is a suitable distance measure, $\tau_{i,j}(
\alpha_i,\alpha_j, \theta) $ is the theoretical Kendall's tau based on estimates, and $\hat \tau_{i,j}$ is the corresponding
empirical Kendall's tau statistics. As for the parameters set, $\alpha_i$ represents the sensitivity of component $i$ to a
systemic shock, and $\theta$ represents the contagion parameter,
that is assumed to be the same across all pairs.

The structure of the model also provides an easy procedure to
check whether the specification of the model provides a good fit
to the data. The idea is that if the model is well specified we
could use it to estimate the intensity of the systemic shock from
market data, and use that information to directly verify the
specification of the model. The property of exponentially
distributed marginals of the Gumbel specification is particularly
useful in this case. Given a panel of $m$ observations of $d$
intensities, we can estimate a time series of $m$ intensities of
the systemic shock. Using equation (\ref{Marg_Int}) and the
definition of $\alpha_k$ it is straightforward to compute
\begin{equation}\label{Esti_Lambda_0}\hat \lambda_0(t_i) = \frac{\sum_{k=1}^d \mu_k^{\theta}(t_i)}{\sum_{k=1}^d{\frac{1}{ \alpha_k}}} \end{equation}
where $\hat \lambda_0(t_i)$ denotes the estimate of the systemic
shock intensity at time $t_i$.\\ A straightforward visual check of
the specification of the model would then be to estimate the
Kendall's tau value between the arrival time of a systemic shock
and marginal default times. If the model is well specified, the
Kendall's tau values should be aligned on the straight line
described by equation (\ref{Cheru}).\\ In that case, the procedure
also provides a new series representing the implied intensity of
the systemic shock, that may be usefully applied for further
investigation of the cluster and of the system as a whole. As an
example, one could verify whether other elements of the system,
originally not associated to that cluster, actually have the same
dependence with the systemic shock as the other elements of the
cluster. As a second example, one could use the estimated systemic
shock intensities of different clusters to check the degree of
association across clusters. \\ Our application, that is meant to
illustrate this estimation procedure, will be focussed on a set of
European banks. We will assume that the banks of the same country
constitute a cluster, and we will verify in which case this
assumption is borne out by the data.

\section{An extension to hierarchical Archimedean risk factors}\label{extension}
In this section we will consider a possible extension of the model with exchangeable dependence structure presented above.
Clearly, any $d+1$-dimensional copula can be considered in place of the Archimedean one and the same construction implemented.
Among the possible reasonable choices, vine- Archimedean copulas and hierarchical copulas (HAC) could be considered as natural non-exchangeable
extensions. \\
In this paper we will consider $d+1$-dimensional HAC copulas.
These are obtained through the composition of simple Archimedean
copulas: such composition is recursively applied using different
segmentations of the random variables involved. Starting from the
initial variables $u_1,\ldots ,u_{d+1}$, these are grouped in
$l_1$ copulas $C_{1,1},\ldots ,C_{1,l_1}$. Then, these copulas are
grouped in $l_2$ copulas $C_{2,1},\ldots ,C_{2,l_2}$, and up to
the last level where we have just one copula.  In order to ensure
that the so obtained HAC copula is indeed a copula, the generators
$\psi_{i,j}$ of the copulas involved have to be completely
monotone and the same must hold for their compositions
$\psi_{i+1,j}^{-1}\circ \psi_{i,k}$ whenever $C_{i,k}$ is an
argument of $C_{i+1,j}$. When the generators $\psi_{i,j}$ are in
the same parametrized family, the described procedure yields a
copula if inner copulas have a parameter higher than the outer
ones: in this paper we will consider generators belonging to the
same family (see Savu and Trede 2008 and McNeil 2008 amog the
others as references on this topic).
\bigskip

In the \emph{fully nested} case we have
$$C({\it {\bf u}})=C_d\left (\ldots C_3\left (C_2\left (C_1(u_1,u_2),u_3\right ),u_4\right ),\ldots ,u_{d+1}\right ).$$
If the probability distribution of the systemic shock $X_0$
corresponds to $u_1$, then, the idiosyncratic risks $X_i$, $i\geq
1$, can be decreasingly ordered with respect to the dependence to
$X_0$ being
$$C_{X_0,X_i}(u,v)=C_{i-1}(u,v).$$
If, instead, the probability of $X_0$ corresponds to $u_{d+1}$,
then
$$C_{X_0,X_i}(u,v)=C_{d}(u,v).$$
and the dependence structure between each idiosyncratic risk and
the systemic one is the same for all the idiosyncratic triggers.
\\
In the intermediate case in which the probability of $X_0$
corresponds to $u_j$ for some $j=2,\ldots ,d$, we have that
$$C_{X_i,X_0}(u,v)=C_{j-1}(u,v)$$
for those $X_i$ whose probabilities correspond to those $u_i$ with
$i<j$, and
$$C_{X_0,X_i}(u,v)=C_{i-1}(u,v)$$
for the probabilities of those $X_i$ correspond to $u_i$ with
$i>j$.
\medskip

Of course, under other hierarchical configurations, completely
different relationships among  the systemic and the idiosyncratic
risks can be modelled. For example if
$$C({\it {\bf u}})=C\left (C_{h,1}\left (u_1,\ldots ,u_{j-1}\right ),C_{h,2}\left (u_j,u_{j+1},\ldots ,u_{d+1}\right )\right )$$
where $C_{h,1}$ and $C_{h,2}$ are again HAC copulas, and $X_0$ corresponds to $u_j$, we have that
$$C_{X_i,X_0}(u,v)=C(u,v)$$
for all probabilities $X_i$ that correspond to those $u_i$ with
$i<j$ and
$$C_{X_0,X_i}(u,v)=C_{h,2}(u,v)$$
for $X_i$ probabilities that correspond to $u_i$ with $i>j$.
Hence, in the first case, the dependence structure between $X_i$
and $X_0$ is constant and weaker than that in the second case
where however it varies according to the structure of $C_{h,2}$.
\medskip

Notice that, however, whatever is the case, the dependence
structure between $X_0$ and $X_i$ is always Archimedean, exactly
as in the exchangeable case investigated in Section \ref{model}.
As a consequence, the formulas there presented for the Kendall's
tau between the systemic shock and every default time continue to
hold. In particular, if: i) all the copulas involved in the
hierarchical construction are of Gumbel type and ii) for every
idiosyncratic shock arrival time $X_i$ there exists a function
$K_i$ such that $\bar F_0(t)=\psi_\theta (\lambda _{0,i}K_i(t))$
and $\bar F_i(t)=\psi_\theta (\lambda _{i}K_i(t))$,
then (\ref{Cheru}) applies. Moreover, for all those default times
$\tau_j$ such that the corresponding idiosyncratic shock arrival
time $X_i$ has a dependence relationship with the systemic shock
one $X_0$ expressed by the same Gumbel copula 
, the pairs $(\alpha _j,\tau_{0,j})$ must lie on the same straight
line (\ref{Cheru}).\\

Notice, then, between the fully exchangeable system, and the fully
non-exchangeable one, we can identify an intermediate case in
which the exchangeability concept is only applied to the bivariate
relationships between the systemic shock arrival times and the
idiosyncratic shocks, whatever the dependence among the
idiosyncratic shocks could be.
\bigskip

\subsection{Dependence structure of observed default times}
However, since the statistical procedure presented in Section
\ref{Estima} is based on the estimation of the pairwise dependence
structure of the default times $\tau _j$, we will now compute the
Kendall's function and the Kendall's tau of any pair of default
times.
\bigskip

Clearly, the shocks involved are the systemic one and the two idiosyncratic ones that correspond to the default times we are considering. Formally,
let $X_i,X_j,X_k$ be the three shocks arrival times we are considering.
Whatever the hierarchical structure is, their joint survival distribution is of type
$$\bar F(x_i,x_j,x_k)=C_{\psi_\phi}\left (C_{\psi{_\theta}}\left (\bar F_i(x_i),\bar F_j(x_j)\right ),\bar F_k(x_k)\right )$$
where $C_{\psi_\phi}$ and $C_{\psi_\theta}$ are bivariate Archimedean copula functions with generators $\psi_\phi$ and $\psi_\theta$.
\medskip

Here below we drop the notation according to which the systemic shock arrival time is denoted $X_0$, so that we can move it in different places of the hyrarchical
structure. In particular, it is sufficient to study the two cases: the
systemic shock is represented by $X_i$ and the case in which it is
represented by $X_k$. For the sake of simplicity, we will assume
that all marginal survival distributions are differentiable when
needed.
\medskip

\subsubsection{$X_i$ is the arrival time of the systemic shock}
Assume $X_i$ be the systemic shock's arrival time and
$$\tau_j=\min (X_i,X_j),\,\tau_k=\min (X_i,X_k)$$
be the considered default times. Then
$$\bar F_{\tau_j,\tau _k}(t_j,t_k)=
\psi_\phi\left (\psi_\phi^{-1}\circ\psi_\theta\left (\psi_\theta ^{-1}\circ\bar F_i(\max (t_j,t_k))+\psi_\theta ^{-1}\circ\bar F_j(t_j)\right )+\psi_\phi ^{-1}\circ\bar F_k(t_k)\right ),$$
$$\bar F_{\tau _j}(t)=\psi_\theta\left (\psi_\theta ^{-1}\circ\bar F_i(t)+\psi_\theta^{-1}\circ\bar F_j(t)\right )=\psi_\theta \circ H_{0,j}(t)$$
and
$$\bar F_{\tau _k}(t)=\psi_\phi\left (\psi_\phi ^{-1}\circ\bar F_i(t)+\psi_\phi ^{-1}\circ\bar F_k(t)\right )=\psi_\phi \circ H_{0,k}(t)$$
where $H_{0,j}(t)=\psi_\theta ^{-1}\circ\bar F_i(t)+ \psi_\theta ^{-1}\circ\bar F_j(t)$ and $H_{0,k}(t)=\psi_\phi ^{-1}\circ\bar F_i(t)+\psi_\phi ^{-1}\circ\bar F_k(t)$.
Hence, thanks to Sklar's Theorem, from
$$t_j=H_{0,j}^{-1}\circ \psi_\theta^{-1}(u_j)\text{ and }t_k=H_{0,k}^{-1}\circ\psi_\phi^{-1}(u_k)$$
we get that the associated survival copula is
$$\begin{aligned}&\hat C_{\tau_j,\tau _k}(u_j,u_k)=\\
&=\psi_\phi\left (\psi_\phi^{-1}\circ\psi_\theta\left (\psi_\theta ^{-1}\circ\bar F_i(\max(H_{0,j}^{-1}\circ \psi_\theta^{-1}(u_j),H_{0,k}^{-1}\circ\psi_\phi^{-1}(u_k)))+\right .\right .\\
&\left .\left .+\psi_\theta ^{-1}\circ\bar F_j\circ H_{0,j}^{-1}\circ \psi_\theta^{-1}(u_j)
\right )+\psi_\phi ^{-1}\circ\bar F_k\circ H_{0,k}^{-1}\circ\psi_\phi^{-1}(u_k)
\right ).\end{aligned}$$
Set
$$ D_{ij}=\psi_\theta ^{-1}\circ\bar F_i\circ H_{0,j}^{-1},
D_{ik}=\psi_\theta ^{-1}\circ\bar F_i\circ H_{0,k}^{-1},
D_{ji}=\psi_\theta ^{-1}\circ\bar F_j\circ H_{0,j}^{-1},
D_{ki}= \psi_\phi ^{-1}\circ\bar F_k\circ H_{0,k}^{-1}.$$
Then
\begin{equation}\label{c1}\begin{aligned}&\hat C_{\tau_j,\tau _k}(u_j,u_k)=\\
&=\psi_\phi\left (\psi_\phi^{-1}\circ\psi_\theta\left (\max(D_{ij}\circ \psi_\theta^{-1}(u_j),D_{ik}\circ\psi_\phi^{-1}(u_k))+
D_{ji}\circ \psi_\theta^{-1}(u_j)
\right )
+D_{ki}\circ\psi_\phi^{-1}(u_k)
\right )=\\
&=\left\{\begin{array}{c}
\psi_\phi\left (\psi_\phi^{-1}(u_j)+D_{ki}\circ\psi_\phi^{-1}(u_k)\right ), u_k\geq h(u_j)\\

\psi_\phi\left (\psi_\phi^{-1}\circ\psi_\theta\left (D_{ik}\circ\psi_\phi^{-1}(u_k)+
D_{ji}\circ \psi_\theta^{-1}(u_j)
\right )+D_{ki}\circ\psi_\phi^{-1}(u_k)\right ), u_k< h(u_j)\\
\end{array}\right .
\end{aligned}\end{equation}
where
$$h(x)=\psi_\phi\circ D_{ik}^{-1}\circ D_{ij}\circ\psi_\theta^{-1}(x).$$
\medskip

{\bf Restriction on the distribution of $X_i,X_j,X_k$}
\medskip

\noindent Assume that there exist two functions $K$ and $\hat K$ such that
$$\psi_\theta ^{-1}\circ\bar F_i(t)=\hat \lambda _i\hat K(t), \psi_\theta ^{-1}\circ\bar F_j(t)= \lambda _j\hat K(t)$$
and
$$\psi_\phi ^{-1}\circ\bar F_i(t)=\lambda _i K(t),
\psi_\phi ^{-1}\circ\bar F_k(t)=\lambda _k K(t)$$
which implies that
\begin{equation}\label{consistency}\hat K(t)=\frac 1{\hat \lambda _i}\psi^{-1}_\theta\circ\psi_\phi\left (\lambda _iK(t)\right).\end{equation}
Now, setting $\mu_{ij}=\hat\lambda _i+\lambda _j$ and $\mu_{ik}=\lambda _i+\lambda _k$,
$$H_{0,j}(t)=\hat\mu_{ij}\hat K(t)\text{ and }
H_{0,k}(t)=\mu_{ik}K(t)$$
and
$$D_{ij}(x)=\frac {\hat\lambda _i}{\mu_{ij}}x,
D_{ik}(x)=\frac {\hat\lambda _i}{\mu_{ik}}x,
D_{ji}(x)=\frac {\lambda _j}{\mu_{ij}}x,
D_{ki}(x)=\frac {\lambda _k}{\mu_{ik}}x,$$
from which
$$\bar F_{\tau _j}(t)=\psi_\theta \left (\mu _{ij}\hat K(t)\right )\text{ and }
\bar F_{\tau _k}(t)=\psi_\phi \left (\mu _{ik} K(t)\right )$$
and
\begin{equation}\label{c2}\begin{aligned}&\hat C_{\tau_j,\tau _k}(u_j,u_k)=\\
&=\psi_\phi\left (\psi_\phi^{-1}\circ\psi_\theta\left (\max\left (\frac {\hat\lambda _i}{\mu_{ij}} \psi_\theta^{-1}(u_j),\frac {\hat\lambda _i}{\mu_{ik}}\psi_\phi^{-1}(u_k)\right )+
\frac {\lambda _j}{\mu_{ij}} \psi_\theta^{-1}(u_j)
\right )
+\frac {\lambda _k}{\mu_{ik}}\psi_\phi^{-1}(u_k)
\right ).\end{aligned}\end{equation}\medskip

\begin{remark} \label{gumb1}{\bf The Gumbel case}\\
Assume $\psi_\theta (x)=e^{-x^{\frac 1\theta}}$ and $\psi_\phi (x)=e^{-x^{\frac 1\phi}}$, with $\theta\geq\phi\geq 1$. Then $\psi_\phi^{-1}\circ\psi_\theta(x)=x^{\frac \phi\theta}$
and (\ref{c2}) writes
$$\begin{aligned}&\hat C_{\tau_j,\tau _k}(u_j,u_k)=\\
&=\exp\left\{-\left (\left (\max\left (\frac {\hat\lambda _i}{\mu_{ij}} (-\log (u_j))^\theta,\frac {\hat\lambda _i}{\mu_{ik}}(-\log(u_k))^\phi\right )+
\frac {\lambda _j}{\mu_{ij}} (-\log(u_j))^\theta
\right )^{\frac\phi\theta}+\right .\right .\\
&\left .\left .+\frac {\lambda _k}{\mu_{ik}}(-\log (u_k))^\phi
\right )^{\frac 1\phi}\right\}.
\end{aligned}$$
Necessarily, by (\ref{consistency}), $\hat\lambda _i\hat K=\lambda _i^{\frac\theta\phi}K^{\frac\theta\phi}$
and an admissible choice is
$$\hat\lambda_i=\lambda _i^{\frac\theta\phi}\text{ and }\hat K=K^{\frac\theta\phi}.$$
In particular,
if
$K(t)=t^\phi$ and $\hat K(t)=t^\theta$ we recover exponential marginal distributions,
that is
$$\bar F_{\tau _j}(t)=e^{-\mu_{ij}^{\frac 1\theta}t}\text{ and }\bar F_{\tau _k}(t)=e^{-\mu_{ik}^{\frac 1\phi}t}.$$
\end{remark}
\bigskip

{\bf The Kendall's function and Kendall's tau}
\medskip

\noindent\begin{theorem}\label{Teo1}
If $\rho=\psi_\phi^{-1}\circ\psi_\theta$, let (see (\ref{c1}))
$$\begin{aligned}& C(u,v)=\\
&=\left\{\begin{array}{c}
\psi_\phi\left (\psi_\phi^{-1}(u)+D_{ki}\circ\psi_\phi^{-1}(v)\right ), v\geq h(u)\\
\psi_\phi\left (\rho\left (D_{ik}\circ\psi_\phi^{-1}(v)+
D_{ji}\circ \psi_\theta^{-1}(u)
\right )+D_{ki}\circ\psi_\phi^{-1}(v)\right ), v< h(u)\\
\end{array}\right .
\end{aligned}$$
where
$$h(x)=\psi_\phi\circ D_{ik}^{-1}\circ D_{ij}\circ\psi_\theta^{-1}(x).$$
We have that the corresponding Kendall's function $\mathcal K(t)=\mathbb P(C(u,v)\leq t)$ and Kendall's tau are, respectively,
\begin{equation}\label{kendall1}\mathcal K(t)=t-\psi_\phi^\prime\circ \psi_\phi^{-1}(t)\cdot \left [D_{ki}\circ \psi_\phi^{-1}(t)-\int _{D_{ik}\circ\psi_\phi^{-1}(t)}^{D_{ik}\circ G^{-1}\circ\psi_\phi^{-1}(t)}\rho^\prime\circ\rho^{-1}(\psi_\phi^{-1}(t)-D_{ki}\circ D_{ik}^{-1}(z))dz\right ]\end{equation}
and
$$\begin{aligned}\tau&=1+4\int_0^1\psi_\phi^\prime\circ \psi_\phi^{-1}(t)\cdot D_{ki}\circ \psi_\phi^{-1}(t)dt-\\
&-4\int_0^1\psi_\phi^\prime\circ \psi_\phi^{-1}(t)\int _{ D_{ik}\circ\psi_\phi^{-1}(t)}
^{ D_{ik}\circ G^{-1}\circ \psi_\phi^{-1}(t)}\rho^\prime\circ\rho^{-1}(\psi_\phi^{-1}(t)-D_{ki}\circ D_{ik}^{-1}(z))dzdt
\end{aligned}$$
where
$$G(x)=\psi_\phi^{-1}\circ\psi_\theta\circ  D_{ij}^{-1}\circ  D_{ik}(x)+D_{ki}(x).$$
\end{theorem}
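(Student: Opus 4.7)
The plan is to use the probabilistic representation $\mathcal K(t)=\mathbb P(C(U,V)\leq t)$ with $(U,V)\sim C$. Since $C(u,v)\leq u$, the event $\{C(U,V)\leq t\}$ contains $\{U\leq t\}$ automatically; on $\{U>t\}$, the conditional CDF of $V$ given $U=u$ is $\partial_1 C(u,\cdot)$, so
$$\mathcal K(t)=t+\int_t^1\partial_1 C(u,v^*(u,t))\,du,$$
where $v^*(u,t)$ is the unique solution in $v$ of $C(u,v)=t$ for $u\geq t$. Because $C$ is piecewise, this integral must be split at the value $u_0=u_0(t)$ where the level curve $\{C(u,v)=t\}$ crosses the switching boundary $v=h(u)$. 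Using the identity $D_{ij}+D_{ji}=\mathrm{id}$, which is immediate from $\psi_\theta^{-1}\circ\bar F_i+\psi_\theta^{-1}\circ\bar F_j=H_{0,j}$, the branch-1 formula at the boundary collapses to $C(u,h(u))=\psi_\phi(G(\psi_\phi^{-1}(h(u))))$. Solving $C(u_0,h(u_0))=t$ then yields $v_0:=h(u_0)=\psi_\phi\circ G^{-1}\circ\psi_\phi^{-1}(t)$, and the branch-1 level-curve relation gives $\psi_\phi^{-1}(u_0)=\psi_\phi^{-1}(t)-D_{ki}\circ G^{-1}\circ\psi_\phi^{-1}(t)$.

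On the branch-1 segment $u\in[t,u_0]$, the level-curve constraint forces the argument of $\psi_\phi'$ to equal $\psi_\phi^{-1}(t)$, so $\partial_1 C(u,v^*)=\psi_\phi'(\psi_\phi^{-1}(t))\,(\psi_\phi^{-1})'(u)$ and the contribution telescopes to $-\psi_\phi'(\psi_\phi^{-1}(t))\,D_{ki}\circ G^{-1}\circ\psi_\phi^{-1}(t)$. On the branch-2 segment $u\in[u_0,1]$, I reparametrize by $v$ using $\partial_1 C\,du=-\partial_2 C\,dv$ along the level curve (so the endpoints become $v\in[t,v_0]$), compute $\partial_2 C$ on branch 2, and substitute $s=\psi_\phi^{-1}(v)$; this turns the contribution into
$$\psi_\phi'(\psi_\phi^{-1}(t))\int_{\psi_\phi^{-1}(t)}^{G^{-1}\psi_\phi^{-1}(t)}\bigl[\rho'(\rho^{-1}(\psi_\phi^{-1}(t)-D_{ki}(s)))\,D_{ik}'(s)+D_{ki}'(s)\bigr]\,ds.$$
The $D_{ki}'(s)$ piece integrates immediately to $D_{ki}\circ G^{-1}\circ\psi_\phi^{-1}(t)-D_{ki}\circ\psi_\phi^{-1}(t)$, and the remaining piece, after the further substitution $z=D_{ik}(s)$, becomes exactly the $z$-integral in (\ref{kendall1}). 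Summing the three contributions with the initial $t$, the $\pm D_{ki}\circ G^{-1}\circ\psi_\phi^{-1}(t)$ terms cancel and (\ref{kendall1}) drops out. The Kendall's tau formula then follows by substituting this expression for $\mathcal K$ into the identity $\tau=3-4\int_0^1\mathcal K(t)\,dt$; the constant $3-4\int_0^1 t\,dt=1$ yields the leading $1$ in the stated expression, and the two bracketed terms produce the remaining integrals unchanged.

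The main technical obstacle is the bookkeeping of the branch-2 change of variables: one must verify that the endpoints $D_{ik}\psi_\phi^{-1}(t)$ and $D_{ik}G^{-1}\psi_\phi^{-1}(t)$ of the $z$-integral correspond exactly to the values of $D_{ik}\circ\psi_\phi^{-1}\circ v^*$ at $u=1$ and $u=u_0$, and that the cancellation of the $D_{ki}\circ G^{-1}\circ\psi_\phi^{-1}(t)$ terms happens because the same function $G$ governs both the boundary-crossing location and the telescoping of the $D_{ki}'(s)$ antiderivative. Once the boundary point $u_0$ has been pinned down through $G$, every other step reduces to mechanical differentiation, substitution, and antiderivative evaluation.
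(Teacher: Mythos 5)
Your argument is correct and is essentially the paper's own proof: both locate the crossing point of the level curve $\{C=t\}$ with the switching boundary $v=h(u)$ through the same function $G$, compute the $C$-measure of the sublevel set branch by branch via the conditional distribution (the paper conditions on $V$ over a rectangle-plus-two-wings decomposition, you condition on $U$ and then change variables to $v$ on the second branch, which makes the two integrals literally coincide), observe the same cancellation of the $D_{ki}\circ G^{-1}\circ\psi_\phi^{-1}(t)$ terms, and finish with $\tau=3-4\int_0^1\mathcal K(t)\,dt$. No gaps; the bookkeeping you flag as the main obstacle is exactly what the paper carries out.
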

\begin{proof} See Appendix \ref{appendix}.
\end{proof}
\bigskip

\begin{remark} {\bf The Gumbel case}\\
In the setting of Remark \ref{gumb1}, we get
$$\mathcal K(t)=t-\frac t\phi(-\log t)^{1-\phi}\cdot\left [\frac{\lambda _k}{\mu_{ik}}(-\log t)^\phi-\frac \phi\theta\int_{\frac{\hat \lambda _i}{\mu_{ik}}(-\log t)^\phi}
^{\frac{\hat \lambda _i}{\mu_{ik}}G^{-1}\left ((-\log t)^\phi\right )}\left ((-\log t)^\phi-\frac {\lambda _k}{\hat\lambda _i}z\right )^{1-\frac \theta\phi}dz\right ]
$$
and
$$\tau=1+\frac{\lambda _k}{\mu_{ik}}\frac 1\phi-\frac 4\theta\int_0^1t(\log t)^{1-\phi}\left (
\int_{\frac{\hat \lambda _i}{\mu_{ik}}(-\log t)^\phi}
^{\frac{\hat \lambda _i}{\mu_{ik}}G^{-1}\left ((-\log t)^\phi\right )}\left ((-\log t)^\phi-\frac {\lambda _k}{\hat\lambda _i}z\right )^{1-\frac \theta\phi}dz\right )dt$$
where $G(x)=\left (\frac{\mu_{ij}}{\mu_{ik}}\right )^{\frac\phi\theta}x^{\frac \phi\theta}+\frac{\lambda_k}{\mu_{ik}}x$.
\end{remark}
\bigskip

\subsubsection{$X_k$ is the arrival time of the systemic shock}
Here we assume that $X_k$ is the arrival time of the shock and
$$\tau_i=\min(X_i,X_k)\text{ and }\tau_j=\min(X_j,X_k)$$
the considered default times.
Then
$$
\bar F_{\tau_i,\tau_j}(t_i,t_j)=\psi_\phi\left [\psi_\phi^{-1}\circ\psi_\theta\left (\psi_\theta^{-1}(\bar F_i(t_i))+\psi_\theta ^{-1}(\bar F_j(t_j))\right )+\psi_\phi^{-1}(\bar F_k(
\max (t_i,t_j))\right ]
$$
and
$$\bar F_{\tau_i}(t_i)=\psi_\phi\circ H_{0,i}(t_i)\text{ and } \bar F_{\tau_j}(t_j)=\psi_\phi\circ H_{0,j}(t_j).$$
where
$$H_{0,i}=\psi_\phi^{-1}\circ\bar F_i+\psi_\phi^{-1}\circ\bar F_k\text{ and }
H_{0,j}=\psi_\phi^{-1}\circ\bar F_j+\psi_\phi^{-1}\circ\bar F_k.$$
If $\rho=\psi_\phi^{-1}\circ\psi_\theta$,
$$\bar F_{\tau_i,\tau_j}(t_i,t_j)=
\psi_\phi\left [\rho\left (\rho^{-1}\circ \psi_\phi^{-1}\circ\bar F_i(t_i)+
\rho^{-1}\circ \psi_\phi^{-1}\circ\bar F_j(t_j)\right )
+\psi_\phi^{-1}\circ\bar F_k(
\max (t_i,t_j))\right ]$$
and, applying Sklar's Theorem, we recover the associated survival copula is
$$\begin{aligned}
\hat C_{\tau_i,\tau _j}(u_i,u_j)&=\psi_\phi\left [\rho\left (\rho^{-1}\circ \psi_\phi^{-1}\circ\bar F_i
\circ H_{0,i}^{-1}\circ\psi_\phi^{-1}(u_i)+\rho^{-1}\circ \psi_\phi^{-1}\circ\bar F_j
\circ H_{0,j}^{-1}\circ\psi_\phi^{-1}(u_j)\right )
+\right .\\
&\left .+\psi_\phi^{-1}\circ\bar F_k(
\max (H_{0,i}^{-1}\circ\psi_\phi^{-1}(u_i),H_{0,j}^{-1}\circ\psi_\phi^{-1}(u_j)))\right ].
  \end{aligned}
$$
Set
$D_{ik}=\psi_\phi^{-1}\circ\bar F_i\circ H_{0,i}^{-1}$,
$D_{jk}=\psi_\phi^{-1}\circ\bar F_j\circ H_{0,j}^{-1}$,
$ D_{ki}=\psi_\phi^{-1}\circ\bar F_k\circ H_{0,i}^{-1}$
and
$ D_{kj}=\psi_\phi^{-1}\circ\bar F_k\circ H_{0,j}^{-1}$.
It follows
\begin{equation}\label{c3}\begin{aligned}
&\hat C_{\tau_i,\tau _j}(u_i,u_j)=\\
&\psi_\phi\left [\rho\left (\rho^{-1}\circ D_{ik}\circ\psi_\phi^{-1}(u_i)+\rho^{-1}\circ D_{jk}\circ\psi_\phi^{-1}(u_j)\right )
+
\max ( D_{ki}\circ\psi_\phi^{-1}(u_i), D_{kj}\circ\psi_\phi^{-1}(u_j))\right ]=\\
&=\left\{\begin{array}{c}
          \psi_\phi\left [\rho\left (\rho^{-1}\circ D_{ik}\circ\psi_\phi^{-1}(u_i)+\rho^{-1}\circ D_{jk}\circ\psi_\phi^{-1}(u_j)\right )
+ D_{ki}\circ\psi_\phi^{-1}(u_i)\right ],\, u_j\geq h(u_i)\\
\psi_\phi\left [\rho\left (\rho^{-1}\circ D_{ik}\circ\psi_\phi^{-1}(u_i)+\rho^{-1}\circ D_{jk}\circ\psi_\phi^{-1}(u_j)\right )
+ D_{kj}\circ\psi_\phi^{-1}(u_j)\right ],u_j<h(u_i)
         \end{array}\right .
  \end{aligned}
\end{equation}
where
$$h(x)=\psi_\phi\circ  D^{-1}_{kj}\circ D_{ki}\circ\psi_\phi^{-1}(x).$$
\medskip

{\bf Restriction on the distribution of $X_j$, $X_j$, $X_k$}\medskip

\noindent Assume there exists a function $K$ such that  $\psi_\phi^{-1}\circ\bar F_v(x)=\lambda _vK(x)$ for $v=i,j,k$,
and set $\mu_{ik}=\lambda _i+\lambda _k$ and $\mu_{jk}=\lambda _j+\lambda _k$. It follows that
$D_{ik}(x)=\frac{\lambda _i}{\mu_{ik}}x$,
$D_{jk}(x)=\frac{\lambda _j}{\mu_{jk}}x$,
$ D_{ki}(x)=\frac{\lambda _k}{\mu_{ik}}x$
and
$ D_{kj}(x)=\frac{\lambda _k}{\mu_{jk}}x$
and the marginal survival distributions can be written as
$$\bar F_{\tau_s}(t)=\psi_\phi\left (\mu_{sk}K(t)\right ),\, s=i,j$$
while the associated survival copula as
$$\begin{aligned}
&\hat C_{\tau_i,\tau _j}(u_i,u_j)=\\
&\psi_\phi\left [\rho\left (\rho^{-1}\left (\frac{\lambda _i}{\mu_{ik}}
\psi_\phi^{-1}(u_i)\right )+\rho^{-1}\left (\frac{\lambda _j}{\mu_{jk}}\psi_\phi^{-1}(u_j)\right )\right )
+
\max (\frac{\lambda _k}{\mu_{ik}}\psi_\phi^{-1}(u_i),\frac{\lambda _k}{\mu_{jk}}\psi_\phi^{-1}(u_j))\right ].
  \end{aligned}
$$
\begin{remark} {\bf The Gumbel case}\\
Assume $\psi_\theta (x)=e^{-x^{\frac 1\theta}}$ and $\psi_\phi (x)=e^{-x^{\frac 1\phi}}$, with $\theta\geq\phi\geq 1$.
Then $\rho(x)=x^{\frac \phi\theta}$ and
$$\begin{aligned}
&\hat C_{\tau_i,\tau _j}(u_i,u_j)=\\
&\exp\left\{-\left [\left (\left (\frac{\lambda _i}{\mu_{ik}}\right )^{\frac \theta\phi}
(-\log (u_i))^{\theta}+\left (\frac{\lambda _j}{\mu_{jk}}\right)^{\frac\theta\phi}(-\log (u_j))^\theta
\right )^{\frac\phi\theta}
+\right .\right .\\
&\left .\left .+
\max \left(\frac{\lambda _k}{\mu_{ik}}(-\log (u_i))^\phi,\frac{\lambda _k}{\mu_{jk}}(-\log (u_j))^\phi\right )\right ]^{\frac 1\phi}
 \right\} \end{aligned}
$$
while, if $K(t)=t^\phi$, we get exponential marginal distributions
$$\bar F_{\tau_i}(t)=e^{-\mu_{ik}^{\frac 1\phi}t}\text{ and }
\bar F_{\tau_j}(t)=e^{-\mu_{jk}^{\frac 1\phi}t}.$$

\end{remark}
\bigskip

{\bf The Kendall's function and Kendall's tau}\medskip

\noindent\begin{theorem}\label{Teo2}
Let (see \ref{c3})
$$\begin{aligned}&C(u,v)=\\
&=\left\{\begin{array}{c}
          \psi_\phi\left [\rho\left (\rho^{-1}\circ D_{ik}\circ\psi_\phi^{-1}(u)+\rho^{-1}\circ D_{jk}\circ\psi_\phi^{-1}(v)\right )
+D_{ki}\circ\psi_\phi^{-1}(u)\right ],\, v\geq h(u)\\
\psi_\phi\left [\rho\left (\rho^{-1}\circ D_{ik}\circ\psi_\phi^{-1}(u)+\rho^{-1}\circ D_{jk}\circ\psi_\phi^{-1}(v)\right )
+ D_{kj}\circ\psi_\phi^{-1}(v)\right ],\, v<h(u)
         \end{array}\right .
\end{aligned}$$
where
$$h(x)=\psi_\phi\circ D^{-1}_{kj}\circ D_{ki}\circ\psi_\phi^{-1}(x).$$
We have that the Kendall's function $\mathcal K(t)=\mathbb P(C(u,v)\leq t)$ and the Kerndall's tau respectively are
\begin{equation}\label{kendall2}\begin{aligned}\mathcal K(t)&=t+
\psi_\phi^\prime\circ \psi_\phi^{-1}(t)\cdot\\
&\cdot \left [
\int_{\rho^{-1}\circ D_{jk}\circ\psi_\phi^{-1}(t)}^{\rho^{-1}\circ \psi_\phi^{-1}\circ\bar F_j\circ G^{-1}\circ\psi_\phi^{-1}(t)}\rho^\prime\circ\rho^{-1}\left\{\psi_\phi^{-1}(t)-\psi_\phi^{-1}\circ\bar F_k
\circ\bar F_j^{-1}\circ \psi_\phi\circ\rho(z)\right\}dz+\right .\\
&\left .+\int_{\rho^{-1}\circ D_{ik}\circ\psi_\phi^{-1}(t)}^{\rho^{-1}\circ \psi_\phi^{-1}\circ\bar F_i\circ G^{-1}\circ\psi_\phi^{-1}(t)}\rho^\prime\circ\rho^{-1}\left\{\psi_\phi^{-1}(t)-
\psi_\phi^{-1}\circ\bar F_k\circ\bar F_i^{-1}\circ\psi_\phi\circ\rho(z)\right\}dz+\right.\\
&\left .-(D_{kj}+D_{ki})\circ\psi_\phi^{-1}(t)+2\psi_\phi^{-1}\circ\bar F_k\circ G^{-1}\circ\psi_\phi^{-1}(t)
\right ]\end{aligned}\end{equation}
and
$$\begin{aligned}\tau
&=1-4\int_0^1\psi_\phi^\prime\circ \psi_\phi^{-1}(t)\cdot
\left [
\int_{\rho^{-1}\circ D_{jk}\circ\psi_\phi^{-1}(t)}^{\rho^{-1}\circ \psi_\phi^{-1}\circ\bar F_j\circ G^{-1}\circ\psi_\phi^{-1}(t)}\rho^\prime\circ\rho^{-1}\left\{\psi_\phi^{-1}(t)-\psi_\phi^{-1}\circ\bar F_k
\circ\bar  F_j^{-1}\circ\psi_\phi\circ\rho(z)\right\}dz+\right .\\
&\left .+\int_{\rho^{-1}\circ D_{ik}\circ\psi_\phi^{-1}(t)}^{\rho^{-1}\circ \psi_\phi^{-1}\circ\bar F_i\circ G^{-1}\circ\psi_\phi^{-1}(t)}\rho^\prime\circ\rho^{-1}\left\{\psi_\phi^{-1}(t)-\psi_\phi^{-1}\circ\bar F_k
\circ\bar F_i^{-1}\circ\psi_\phi\circ\rho(z)\right\}dz\right ]dt+\\
&-4\int_0^1\psi_\phi^\prime\circ \psi_\phi^{-1}(t)\cdot (2\psi_\phi^{-1}\circ\bar F_k\circ G^{-1}-
(D_{kj}+D_{ki}))\circ \psi_\phi^{-1}(t)dt
\end{aligned}$$
where
\begin{equation}\label{g2}
G(z)=\rho\left\{\rho^{-1}\circ \psi_\phi^{-1}\circ\bar F_i(z)+\rho^{-1}\circ \psi_\phi^{-1}\circ\bar F_j
(z)\right\}+\psi_\phi^{-1}\circ\bar F_k(z).\end{equation}
\end{theorem}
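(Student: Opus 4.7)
The plan is to compute the Kendall's function $\mathcal K(t)=\mathbb P(C(U,V)\leq t)$ first, and then to recover the Kendall's tau via the identity $\tau=3-4\int_0^1\mathcal K(t)\,dt$. Since the stated $\mathcal K$ in (\ref{kendall2}) has the form $t+\psi_\phi'\circ\psi_\phi^{-1}(t)\cdot[\cdots]$, this identity immediately yields $\tau=1-4\int_0^1 \psi_\phi'\circ\psi_\phi^{-1}(t)\cdot[\cdots]\,dt$, which matches the claim. To get $\mathcal K(t)$ itself, I would condition on $U=u$ and use $\mathbb P(V\leq v\mid U=u)=\partial_u C(u,v)$, obtaining $\mathcal K(t)=t+\int_t^1 \partial_u C(u,v^\star(u;t))\,du$, where $v^\star(u;t)$ is the unique solution of $C(u,v)=t$ for $u>t$; the leading $t$ comes from the range $u\leq t$ where $v^\star=1$ and $\partial_u C(u,1)=1$.

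The key geometric step is to locate where the level set $\{C=t\}$ crosses the seam $v=h(u)$. Setting $U=\bar F_{\tau_i}(\tau_i)$ and $V=\bar F_{\tau_j}(\tau_j)$, the inequality $V\geq h(U)$ unwinds to $\psi_\phi^{-1}\circ\bar F_k(\tau_j)\leq \psi_\phi^{-1}\circ\bar F_k(\tau_i)$, i.e.\ $\tau_j\leq \tau_i$. So the two pieces of $C$ in (\ref{c3}) correspond exactly to the two possible orderings of $\tau_i$ and $\tau_j$, and the seam is the image of the diagonal $\tau_i=\tau_j$ under $(\bar F_{\tau_i},\bar F_{\tau_j})$. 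On this diagonal one has $\bar F_{\tau_i,\tau_j}(z,z)=\psi_\phi(G(z))$ with $G$ as in (\ref{g2}), so the level set meets the seam at the point governed by $z^\star(t)=G^{-1}\circ\psi_\phi^{-1}(t)$, with abscissa $u^\star(t)=\psi_\phi\circ H_{0,i}(z^\star(t))$ and ordinate $\psi_\phi\circ H_{0,j}(z^\star(t))$.

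On each of the sub-intervals $(t,u^\star(t))$ and $(u^\star(t),1)$ the appropriate branch of $C$ is used, and $\partial_u C(u,v^\star(u;t))$ is obtained by implicit differentiation of $C(u,v^\star)=t$, producing an expression that pairs $\rho'\circ\rho^{-1}$ of a $\psi_\phi^{-1}(t)$-dependent argument against the outer $D_{k\cdot}\circ\psi_\phi^{-1}$ term. Introducing the change of variable $z=\rho^{-1}\circ D_{jk}\circ\psi_\phi^{-1}(v^\star(u;t))$ on one branch and $z=\rho^{-1}\circ D_{ik}\circ\psi_\phi^{-1}(v^\star(u;t))$ on the other converts the two contributions into exactly the two integrals in (\ref{kendall2}); the limits $\rho^{-1}\circ D_{jk}\circ\psi_\phi^{-1}(t)$ and $\rho^{-1}\circ D_{ik}\circ\psi_\phi^{-1}(t)$ arise from the endpoint $u=1$ (where $v^\star=t$), while the limits involving $\bar F_j\circ G^{-1}$ and $\bar F_i\circ G^{-1}$ arise from the crossing $u=u^\star(t)$. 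The remaining boundary terms $-(D_{kj}+D_{ki})\circ\psi_\phi^{-1}(t)+2\psi_\phi^{-1}\circ\bar F_k\circ G^{-1}\circ\psi_\phi^{-1}(t)$ appear when the pure $D_{k\cdot}\circ\psi_\phi^{-1}(v^\star)$ summands of $\partial_u C$ are evaluated at these same endpoints, using the continuity of $C$ across the seam.

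The main obstacle is the careful bookkeeping for the two implicit differentiations and the matching at the seam $v=h(u)$, including the sign tracking when $\psi_\phi'$ is negative. Once the correspondence between the two branches of (\ref{c3}) and the orderings of $\tau_i$ and $\tau_j$ is set up as above, the remaining steps are routine but lengthy, and mirror the argument used for Theorem \ref{Teo1}.
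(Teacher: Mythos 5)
Your proposal is correct and follows essentially the same route as the paper's proof: decompose the sub-level set $\{C\leq t\}$ at the point where the level curve meets the seam $v=h(u)$ (located, exactly as you do, via $G$ and the diagonal of $\bar F_{\tau_i,\tau_j}$), integrate the copula's partial derivatives along the two branches with the indicated substitutions, and recover $\tau$ from $\tau=3-4\int_0^1\mathcal K(t)\,dt$. The only cosmetic difference is that you condition on $U$ throughout, whereas the paper conditions on $V$ over the wing $u>u_t$ and on $U$ over the wing $v>v_t$; the two are reconciled by the level-curve identity $\partial_1C\,du=-\partial_2C\,dv$, which is precisely the extra step needed for your substitution $z=\rho^{-1}\circ D_{jk}\circ\psi_\phi^{-1}(v^\star(u;t))$ on the second branch to land on the stated integral.
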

\begin{proof} See Appendix \ref{appendix}.\end{proof}
\bigskip

\begin{remark} {\bf The Gumbel case}\\
In the setting of Remark \ref{gumb1} we get
$$\begin{aligned}\mathcal K(t)&=
t-\frac t\phi (-\log t)^{1-\phi}\cdot\left [\frac \phi\theta\int_{\left (\frac{\lambda_j}{\mu_{jk}}\right )^{\frac \theta\phi}(-\log t)^\theta}^{\lambda_j^{\frac \theta\phi}\left (G^{-1}((-\log t)^\phi)\right )^{\frac\theta\phi}}\left ((-\log t)^\phi-\frac{\lambda _k}{\lambda _j}z^{\frac\phi\theta}\right )^{1-\frac\theta\phi}dz+\right .\\
&\left .+\frac \phi\theta\int_{\left (\frac{\lambda_i}{\mu_{ik}}\right )^{\frac \theta\phi}(-\log t)^\theta}^{\lambda_i^{\frac \theta\phi}\left (G^{-1}((-\log t)^\phi)\right )^{\frac\theta\phi}}\left ((-\log t)^\phi-\frac{\lambda _k}{\lambda _i}z^{\frac\phi\theta}\right )^{1-\frac\theta\phi}dz+\right .\\
&\left .-\left (\frac {\lambda _k}{\mu_{ik}}+\frac{\lambda _k}{\mu_{jk}}\right )(-\log t)^\phi+2\lambda _kG^{-1}((-\log t)^\phi)
\right ]
\end{aligned}$$
and
$$\begin{aligned}\tau&=
1-\frac 4\theta\int_0^1t(-\log t)^{1-\phi}\cdot\left [
\int_{\left (\frac{\lambda_j}{\mu_{jk}}\right )^{\frac \theta\phi}(-\log t)^\theta}^{\lambda_j^{\frac \theta\phi}\left (G^{-1}((-\log t)^\phi)\right )^{\frac\theta\phi}}\left ((-\log t)^\phi-\frac{\lambda _k}{\lambda _j}z^{\frac\phi\theta}\right )^{1-\frac\theta\phi}dz+\right .\\
&\left .+\int_{\left (\frac{\lambda_i}{\mu_{ik}}\right )^{\frac \theta\phi}(-\log t)^\theta}^{\lambda_i^{\frac \theta\phi}\left (G^{-1}((-\log t)^\phi)\right )^{\frac\theta\phi}}\left ((-\log t)^\phi-\frac{\lambda _k}{\lambda _i}z^{\frac\phi\theta}\right )^{1-\frac\theta\phi}dz
\right ]dt+\\
&-\frac 4\phi\int_0^1t(-\log t)^{1-\phi}\left (2\lambda _kG^{-1}((-\log t)^\phi )-\left (\frac {\lambda _k}{\mu_{ik}}+\frac {\lambda _k}{\mu_{jk}}\right )(-\log t)^\phi\right )dt
\end{aligned}$$
where
$$G(z)=\left (\lambda _i^{\frac\theta\phi}+\lambda _j^{\frac\theta\phi}\right )^{\frac \phi\theta}z^\phi+\lambda _k z.$$
\end{remark}

\section{An application to the European banking sector}\label{sector}
In this section we apply the model to the issue of evaluating
systemic risk and contagion in a set of European banking systems.
Until now, the European banking system has been segmented at the
national level, and only after November 4th 2014 it is unified
under a common European regulation and supervision setting (the so
called SSM, Single Supervisory Mechanism, see, for example, Ferran
and Babis, 2013). It is then important to recognize the relevance
of systemic risks and contagion at the national level, and address
the issue whether they co-move at the cross-country level. The
task is to check whether the exchangeable contagion model may
provide a good representation of the data. Of course, here our
interest is mainly in the illustration of the estimation technique
and how it can provide a guide for the specification of the model.

\subsection{Data}

We apply the model described above to a sample of 35 banks
representative of 8 countries of the Euro area. The sample used is
the same as in Baglioni and Cherubini (2013). While we refer the
reader to that paper for an in-depth description of the data set,
here we simply mention that the sample consists of those major
European banks that were subject to the stress test exercise in
2012 and for which a time series of CDS quotes was available on
Datastream. The sample consists of daily data of CDS quotes,
ranging from January 2007 to end of August 2012, with the
exception of Greece for which the sample begins on September 21st
2009, Portugal and Spain, for which the sample starts in January
and February 2008, respectively. The survival probabilities were
extracted from the 5 year CDS quote using what is called the
"simple rule", that is assuming a flat default intensity, which is
consistent with the model described in Section \ref{exponential}.
Moreover, since it is well known that data extracted from market
prices embed a risk premium, that is are computed under the risk
neutral measure, we changed the default probabilities by applying
the Sharpe ratio, according to the technique used by the Moody's
rating agency (see Dwyer et al., 2010).

Future research could investigate further the marginal structure
including more sophisticated technologies to "bootstrap" (in the
financial literature meaning of the term) the term structure of
default intensities (Hull and White, 2000).

\subsection{Estimation and results}
The estimation procedure applied to the data was the same
described in Section \ref{Estima}. Namely, we computed pairwise
Kendall's tau values for the survival probabilities of all the
banks in the same country. Then, for each country we estimated the
$\alpha_k$ parameters and the $\theta$ parameter, minimizing the
distance between theoretical and sample Kandall tau's. In our
specific application, we used the quadratic distance. Due to the
presence of local minima, that arose in preliminary work, the
analysis was finally carried out using a standard global
optimization technique, namely simulated annealing.

Our estimation strategy consisted of three steps.
\begin{itemize}
    \item We first
estimated the model on the whole sample for each country.
    \item Then,
for each country we used the model to estimate the intensity of
the systemic shock and we computed the Kendall's tau between the
survival probability of each bank and the systemic shock. We
verified for which countries the model specification is consistent
with the data.
    \item For the countries where the model specification was
    considered consistent, we provided an analysis of the
    stability of parameters, by repeating the estimate in a
sequence of rolling windows.
\end{itemize}

In Table \ref{tab1} we report the results for the estimates
carried out over the whole sample. For each country,  we report:
i) the $\alpha_k$ parameters for each bank; ii) the average
$\bar{\alpha}$ for the country; iii) the contagion parameter
$\theta$ for the country. It is worth mentioning that the average
parameter $\bar{\alpha}$ is computed as the harmonic mean of the
$\alpha_k$ for each country.
\begin{center}
 \begin{table}
\begin{center}\begin{tabular}{|c|c|}
\hline
{\bf GERMANY}&{\bf SPAIN}\\
 \hline
\begin{tabular}{c|c}
   {\bf Bank}&${\bf \alpha}$\\
  \hline
 HSH  &0.288926\\
   \hline
  WEST LB&0.783299\\
  \hline
  POSTBANK&0.782422\\
  \hline
  DZ BANK&0.637563\\
  \hline
  BAYERN LB&0.885879\\
  \hline
  COMMERZ&0.869757\\
  \hline
  DB&0.752793\\
  \hline
   $\theta$=1&$\bar\alpha$=0.625488
   \end{tabular}
&\begin{tabular}{c|c}
   {\bf Bank}&${\bf \alpha}$\\
  \hline
 PASTOR &0.295905\\
   \hline
   BINTEL&0.397199\\
   \hline
   SABADEL&0.699246\\
   \hline
  POPULA& 0.842172\\
  \hline
  CAJA MADRID&0.059619\\
  \hline
  BBVA&6.99$\cdot 10 ^{-7}$\\
  \hline
  SANTANDER&8.44$\cdot 10^{-7}$\\
  \hline
   $\theta$=5.803539&$\bar\alpha$=0.197725
   \end{tabular}
\\
\hline
{\bf ITALY}&{\bf NETHERLANDS}\\
 \hline
\begin{tabular}{c|c}
   {\bf Bank}&${\bf \alpha}$\\
  \hline
  UBI&0.260332\\
   \hline
   M-PASCHI&0.260332\\
   \hline
   INTESA&1.0\\
   \hline
   UNICREDIT&0.583622\\
   \hline
   $\theta$=5.589623&$\bar\alpha$=0.449591
   \end{tabular}
&\begin{tabular}{c|c}
  {\bf Bank}&${\mathbf \alpha}$\\
  \hline
SNS&0.705556\\
\hline
ABN AMRO&1.279263\\
\hline
RABOBANK&0.824459\\
\hline
ING&0.550825\\
\hline
$\theta$=1.279262&$\bar\alpha $=0.674809
\\
  \end{tabular}\\
\hline
{\bf FRANCE}&{\bf GRECE}\\
 \hline
\begin{tabular}{c|c}
   {\bf Bank}&${\bf \alpha}$\\
  \hline
SOC GEN&0.690164\\
\hline
CA&0.469473\\
\hline
BNP&0.502889\\
  \hline
   $\theta$=6.060185&$\bar\alpha$=0.5388

   \end{tabular}
&\begin{tabular}{c|c}
   {\bf Bank}&${\bf \alpha}$\\
  \hline
ALPHA&0.987782\\
\hline
EFG&0.199412\\
\hline
NBG&0.726229\\
  \hline
$\theta$=3.311887&$\bar\alpha$=0.405181
   \end{tabular}
\\
\hline
{\bf PORTUGAL}&{\bf UK}\\
\hline
\begin{tabular}{c|c}
   {\bf Bank}&${\bf \alpha}$\\
  \hline
ESP. SANTO&0.791219\\
\hline
BCP&0.695637\\
\hline
CAIXA GERAL&0.215709\\
  \hline
$\theta$=5.369255&$\bar\alpha$=0.408871
   \end{tabular}
&\begin{tabular}{c|c}
  Bank&$\alpha$\\
\hline
LLOYDS& 0.930215\\
\hline
BARCLAYS&2.88$\cdot 10^{-07}$\\
\hline
HSBC&4.08$\cdot 10^{-18}$\\
\hline
RBS&0.769870\\
\hline
$\theta$=5.369255&$\bar\alpha $=0.842481
\\
 \end{tabular}\\
\hline
\end{tabular}\end{center}
\caption{\footnotesize Parameters' values for different banks and countries}\label{tab1}
\end{table}
\end{center}

Before discussing the parameters estimated, we use them to provide
a visual check of the specification of the model. For each country
we provided a diagram in which on the horizontal axis we reported
the $\alpha_k$ and on the vertical one the Kendall's tau value. In
the diagram we plot the estimated Kendall's tau statistics between
the survival function of a systemic shock, estimated from the
intensity in equation (\ref{Esti_Lambda_0}), and the survival
probabilities of the banks in the sample. In the plot we also
reported the straight line on which the Kendall's tau values
should lie if the model is well specified, according to equation
(\ref{Cheru}).

Figure \ref{fig1} shows that the model provides a good specification for
Portugal, France, Spain and the Netherlands. In particular, the
specification looks very good for Portugal and France. Spain and
the Netherlands provide an interesting insight on the model. In
both these countries there are banks whose value of $\alpha_k$ is
very close to zero. They are Abn Amro in the Netherlands and
Santander and BBVA in Spain. These three banks are not affected by
the systemic shock, and they are very close to the intercept of
the line. Nevertheless, they are linked by a substantial degree of
dependence to the systemic risk factor, meaning that an increase
of their probability of default may be associated to a higher
probability
of a country-wide shock.
\begin{figure}[htbp]
 \centering
 \includegraphics[width=3cm, height=4cm,angle=-90]{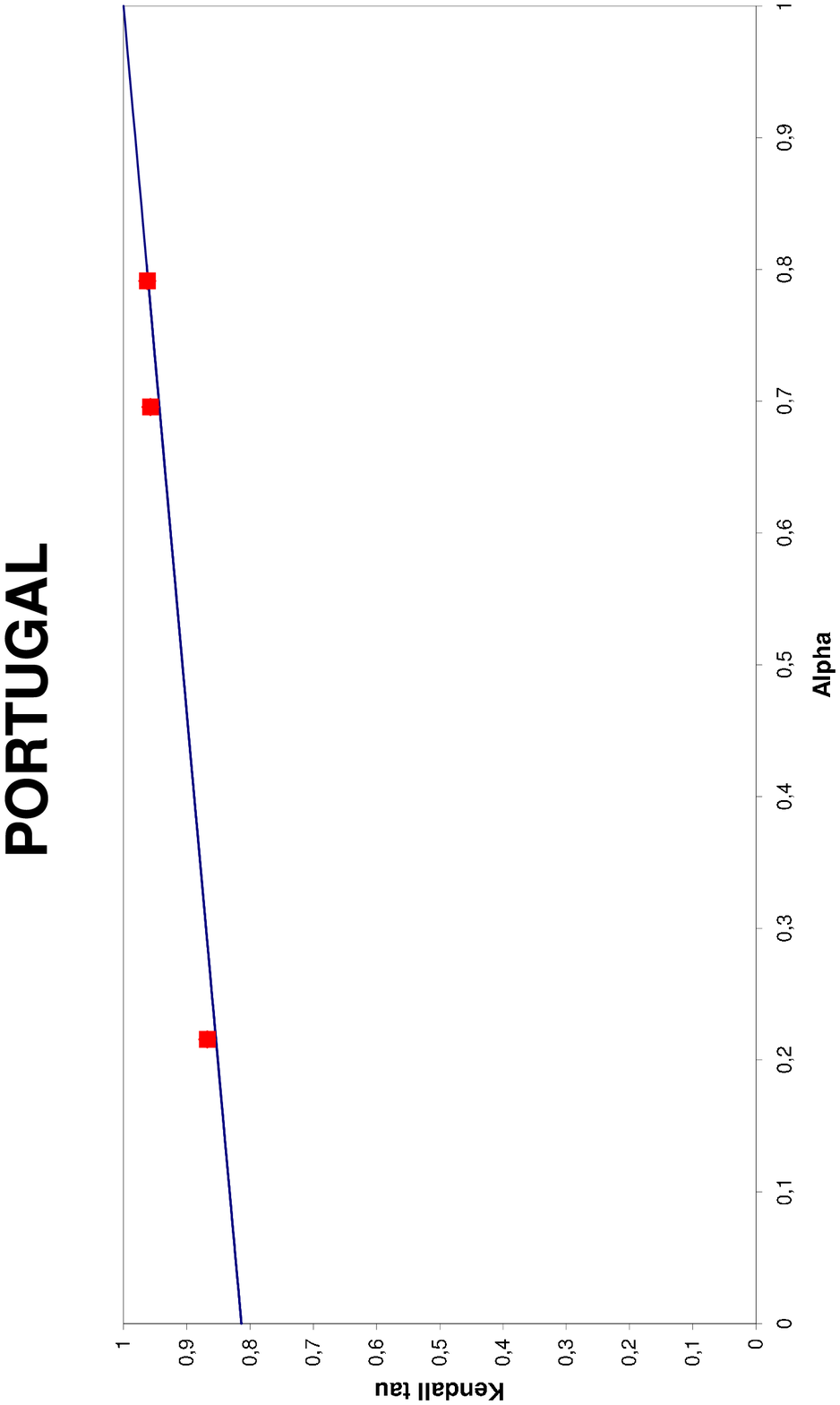}%
\qquad\qquad \includegraphics[width=3cm, height=4cm,angle=-90]{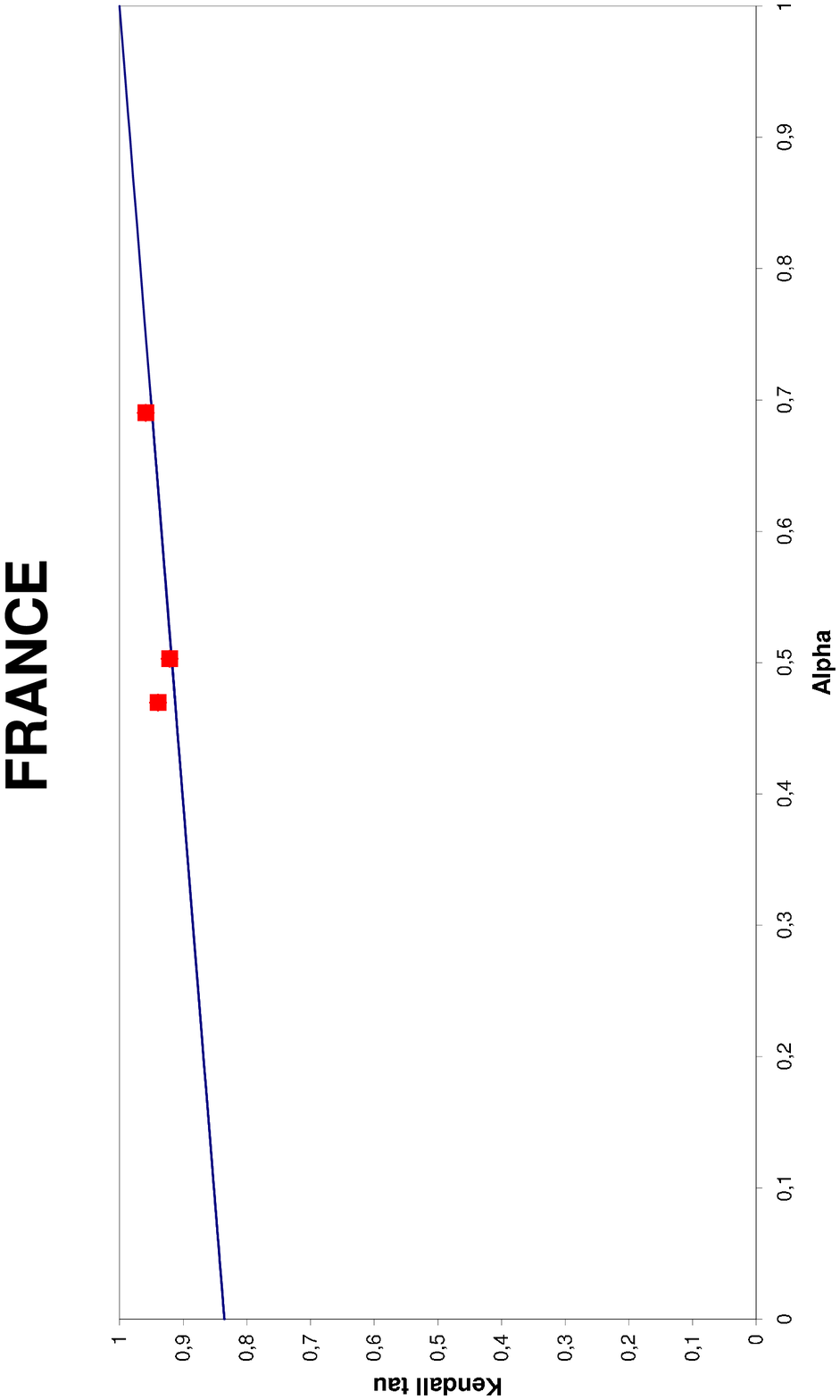}\\
\vskip 0.8cm\includegraphics[width=3cm, height=4cm,angle=-90]{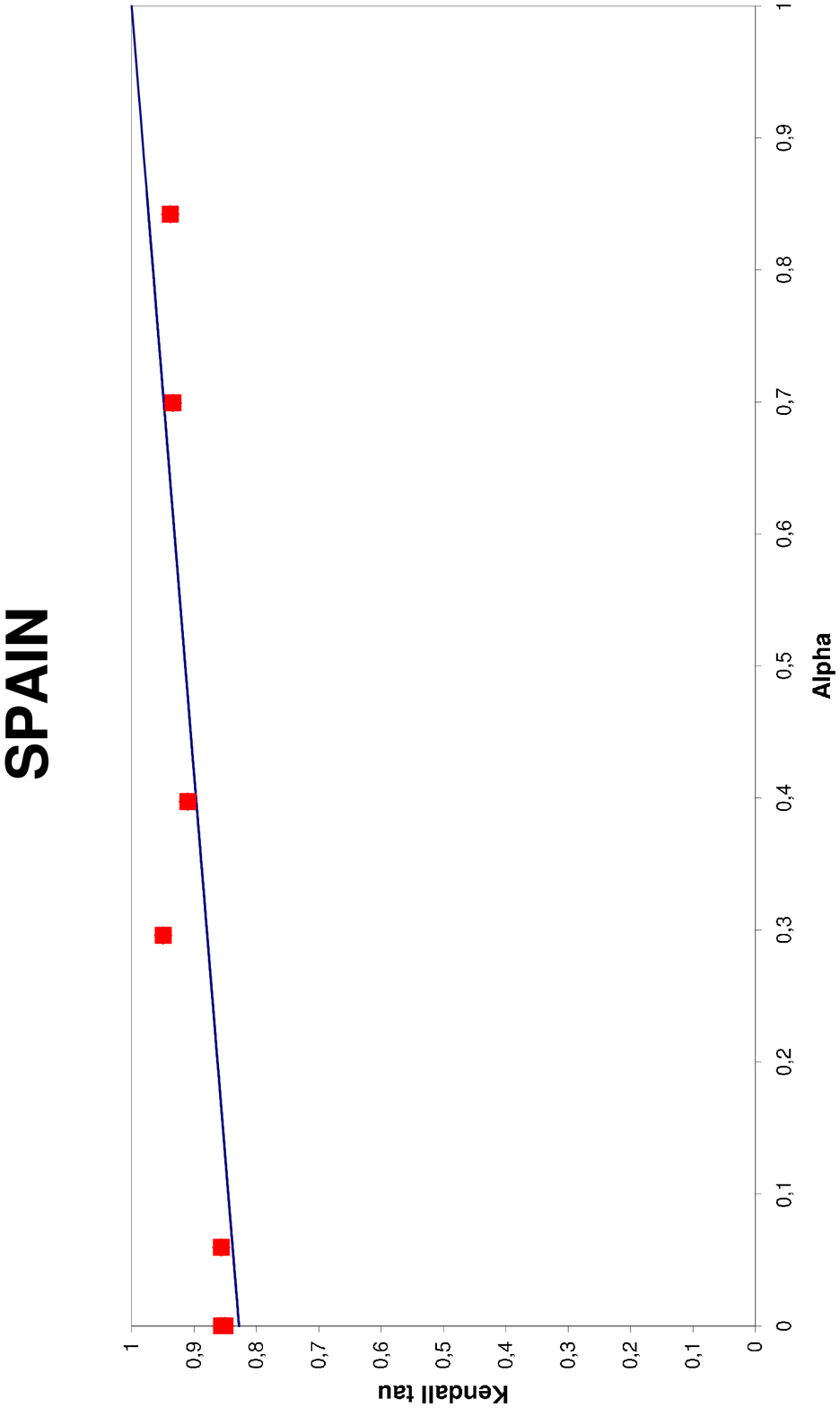}%
\qquad\qquad \includegraphics[width=3cm, height=4cm,angle=-90]{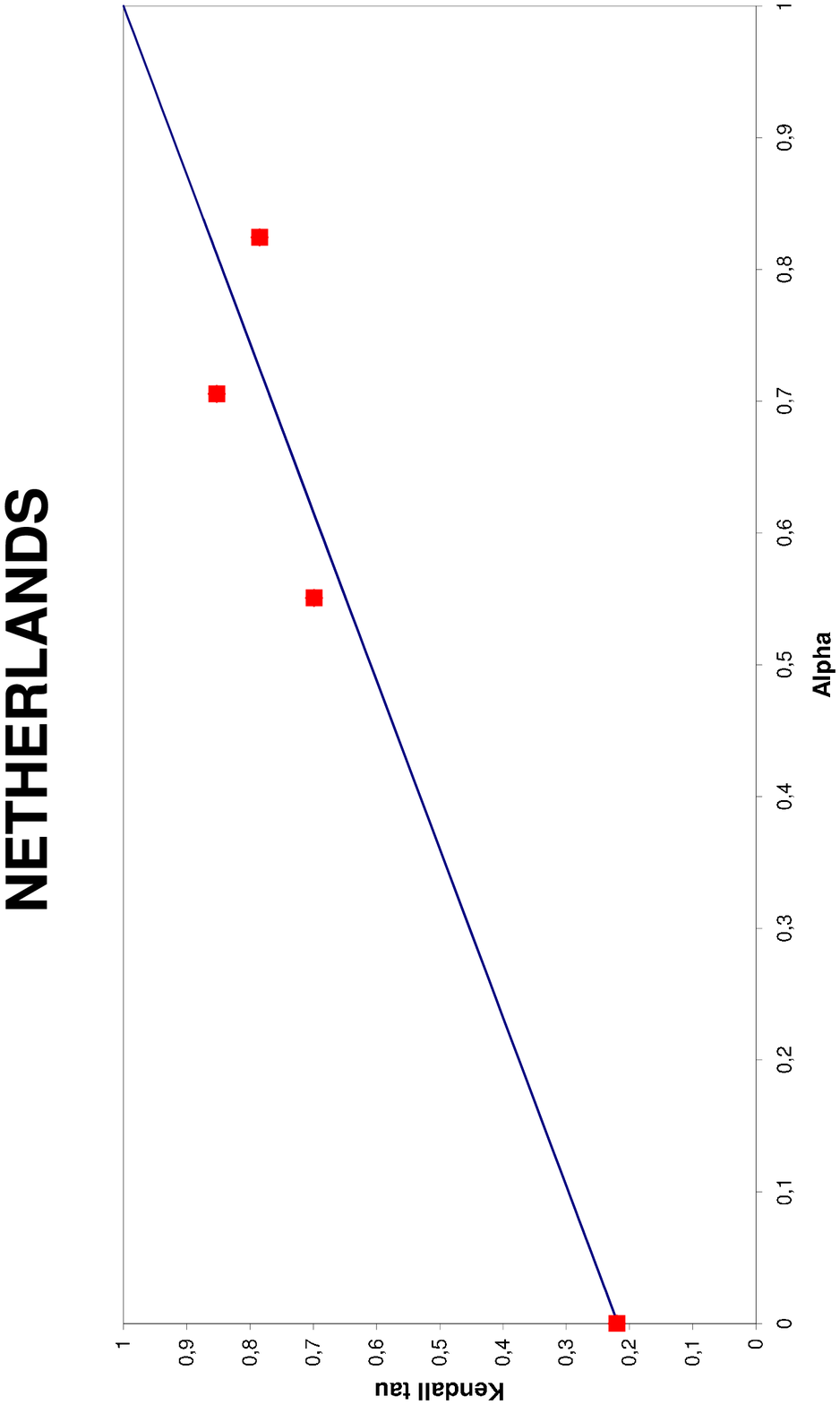}
\vskip 0.5cm
\caption{Kendall's tau between default times and systemic shock}\label{fig1}
\end{figure}
\\
For the other countries (see Figure \ref{fig2}), that is Italy, Greece, UK and Germany,
the model does not seem to fit the data well. In particular,
Italy, Greece and UK could have a chance of a better fit if a
non-exchangeable model were used. For Germany, instead, the model
appears completely wrong, since almost all the Kendall's tau's lie
below the diagonal in a region, and are not even consistent with
the pure systemic risk specification. So, in this case, it seems
that either a model of bivariate relationships without any
systemic shock could be preferable, or that there can be some
other systemic risk factor missing. Actually, the discussion
appeared in newspapers and magazines during the year of the stress
testing analysis would suggest that the German banking system is
exposed to two key risk factors: the first is the exposure to the
so-called "toxic assets", coming from the US subprime crisis, the
second, less known, is exposure to a specific sector of obligors,
namely those linked to the shipping
business.

\begin{figure}[htbp]
 \centering
 \includegraphics[width=3cm, height=4cm,angle=-90]{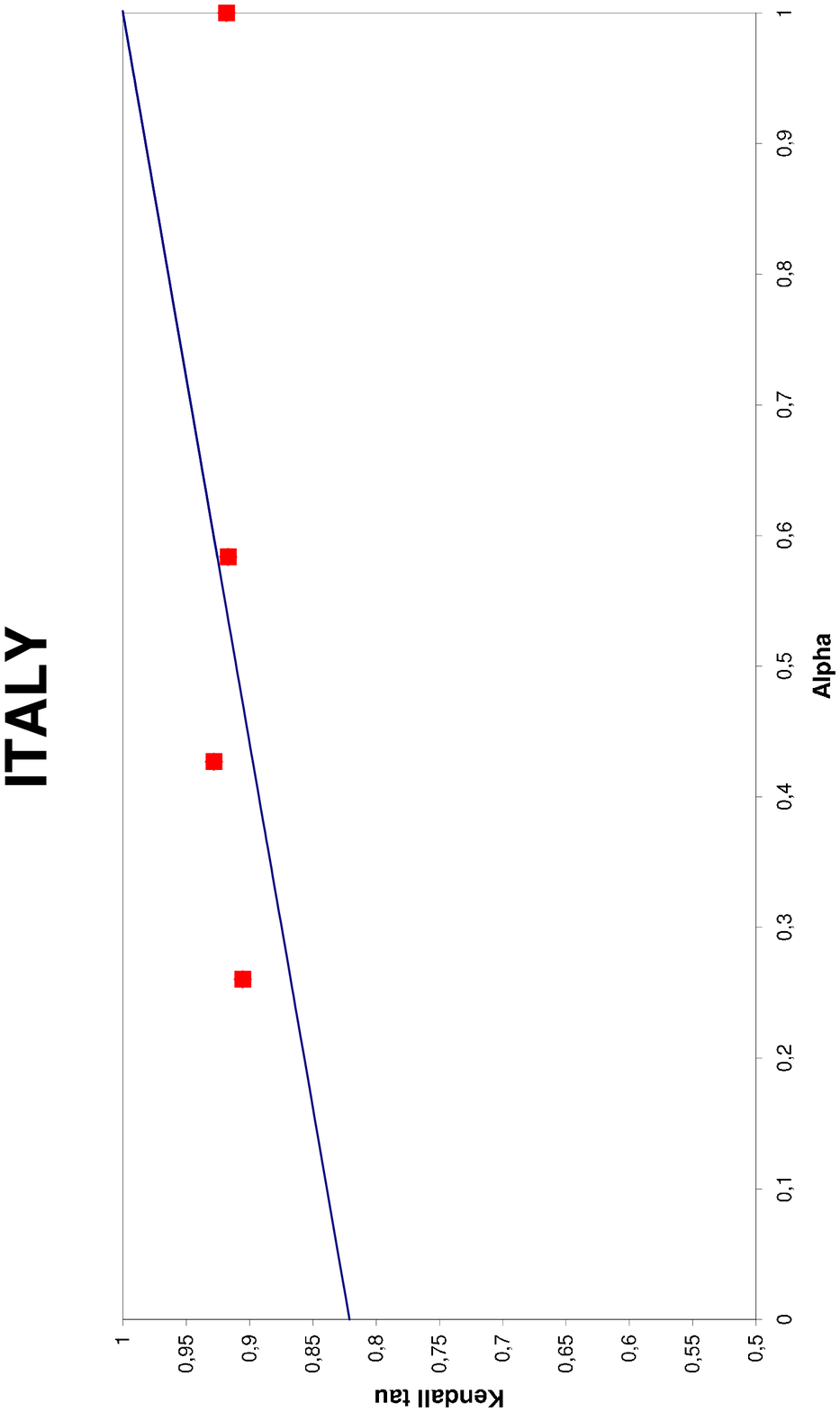}%
\qquad\qquad \includegraphics[width=3cm, height=4cm,angle=-90]{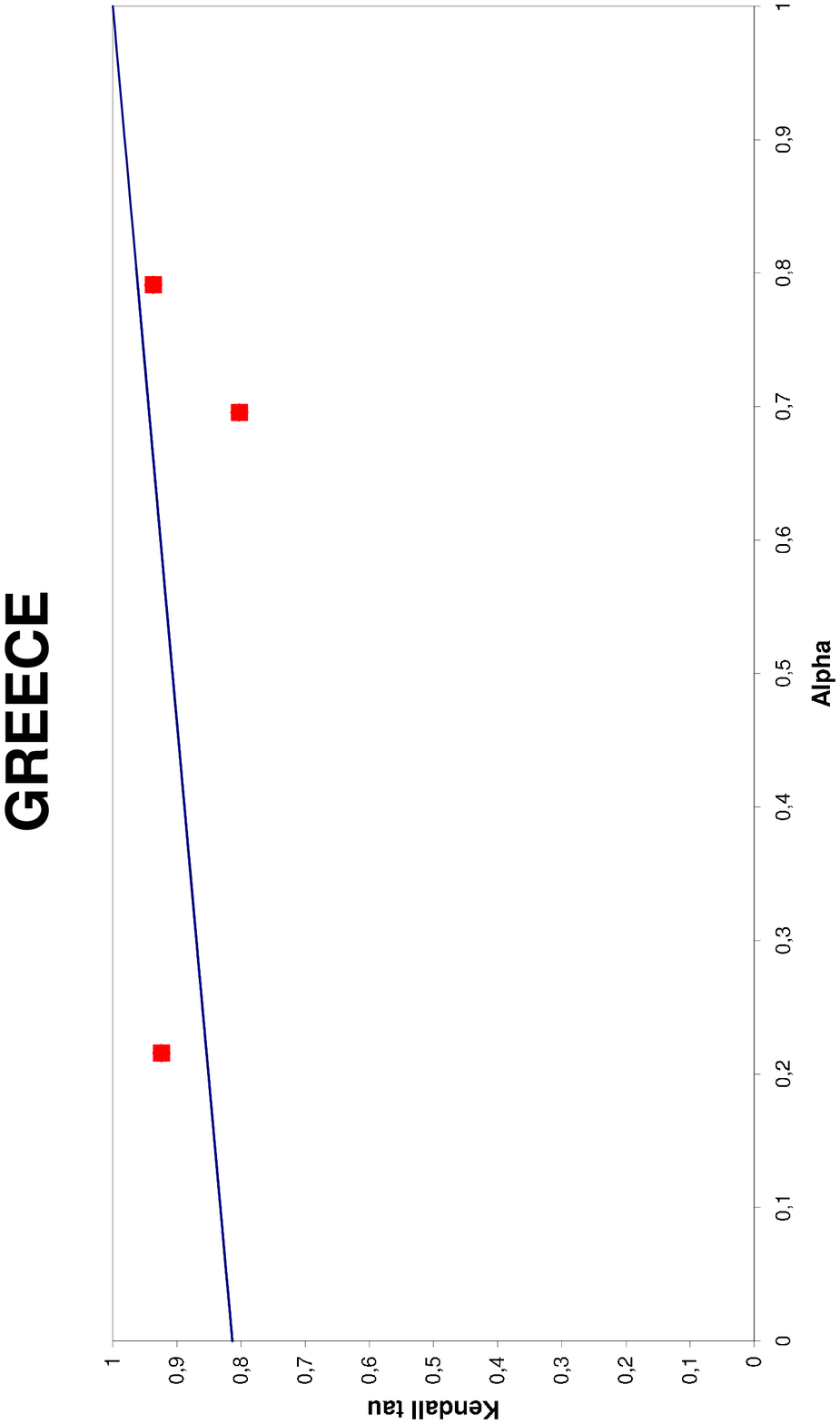}\\
\vskip 0.8cm\includegraphics[width=3cm, height=4cm,angle=-90]{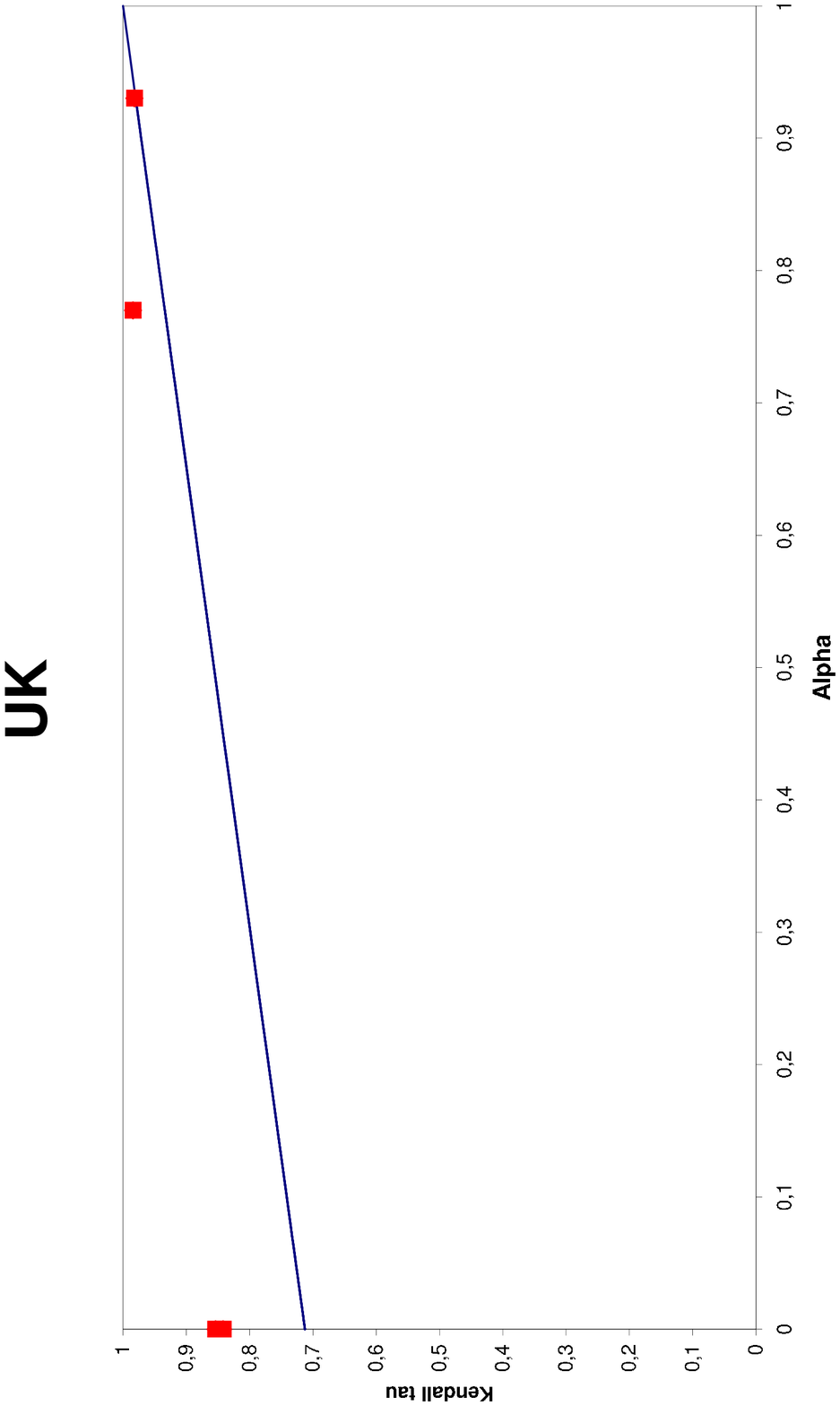}%
\qquad\qquad \includegraphics[width=3cm, height=4cm,angle=-90]{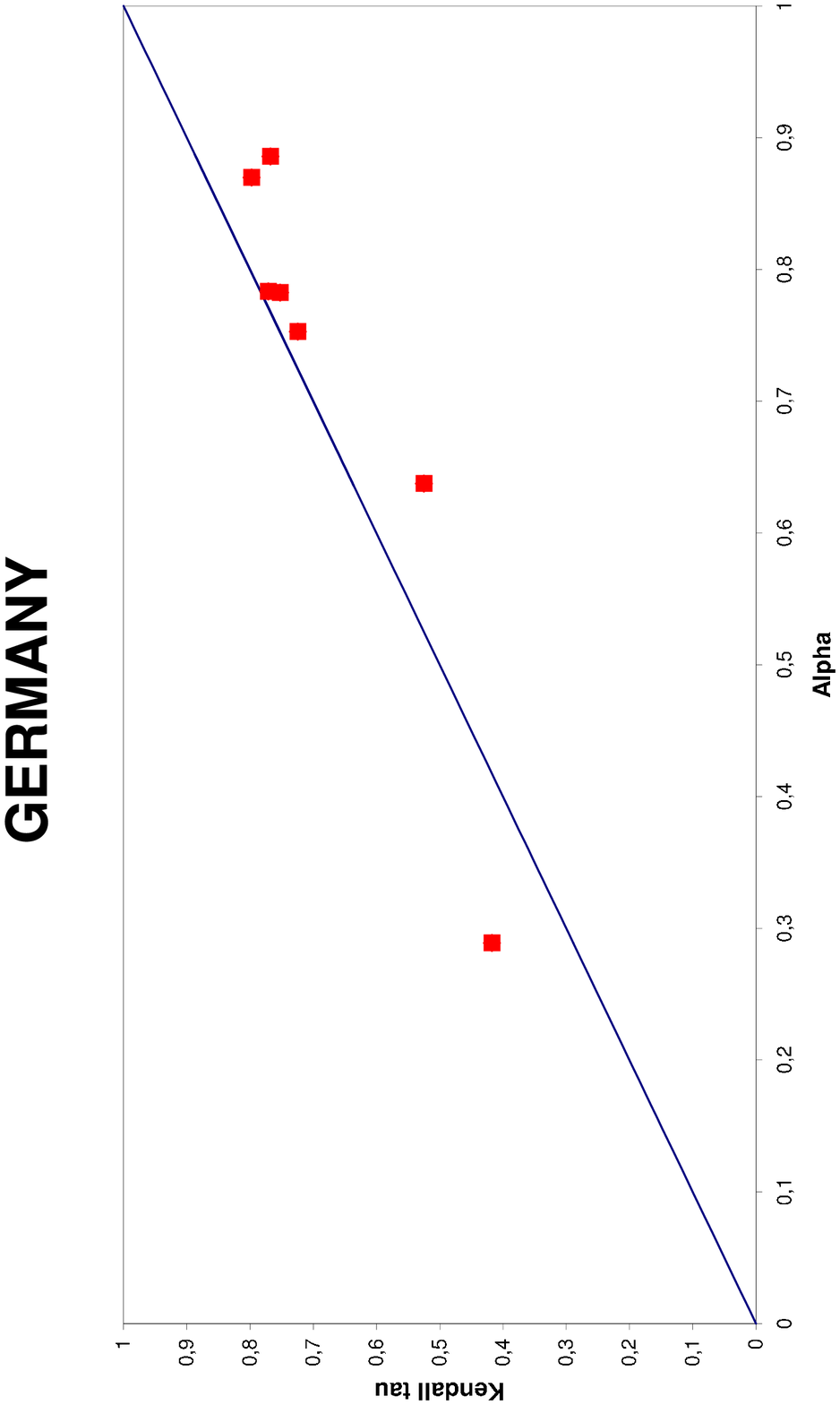}
\vskip 0.5cm
\caption{Kendall's tau between default times and systemic shock}\label{fig2}
\end{figure}

So, our estimation strategy proved able to discriminate cases in
which the model provides a good fit to the data from cases in
which it does not. For the four cases in which the model seems to
work for the entire sample, we now provide an analysis of the
stability of parameters, and in particular of the contagion
parameter $\theta$ across the sample. We replied the estimation
using rolling windows of several lengths, even though here in
order to save space we only report the one based on one year of
daily data. An alternative more sophisticated approach would be to
consider time varying parameters with an estimate performed on
GARCH filtered residuals. This on one side could be more accurate,
while on the other
side it would be inconsistent with the flat intensity assumption, calling for a proper specification of a double stochastic model for each marginal intensity curve. \\
We also performed the analysis first letting all the parameters
change through time, and then assuming the $\alpha_k$ fixed across
the sample, allowing only the contagion parameter $\theta$ to
change. The reason for the latter choice is twofold. First, since
the sensitivity of each bank to shocks mostly depends on its
balance sheet, it is reasonable to assume that the parameters
$\alpha_k$ remain quite stable across the sample. Second, it was
interesting to check whether the estimation of the contagion
parameter, that is the main target of our research, was affected
by changes in the bank specific parameters. We found that the
dynamics of the contagion parameter is almost
indistinguishable in the two cases.\\
In Figure \ref{fig3} we report the results of the analysis for the four
cases in which the model works. The question we have in mind is
whether the contagion parameters increased in the two crucial
periods of the crisis. The first was in the first quarter of 2009,
when the Lehman crisis of September 2008 propagated to Europe. The
second is the sovereign debt crisis triggered by Greece in 2010
and then spread to the other countries of Southern Europe. Figure \ref{fig3}
confirms an interesting co-movement behavior of the contagion
parameters for Spain, Portugal, France and Greece. Intuitively,
when financial crisis spread in the international environment, the
relevance of contagion from the banks within each country is
increasing. Differently from this evidence, however, in the last
part of the sample, characterized by the Italian sovereign crisis,
only contagion within the French banking system seems to markedly
increase, while in the other countries it remains stable or
decrease. This could be consistent with the greater involvement of
the French banking system with the Italian one. In fact, in a
previous version of this work, in which the dynamic analysis had
been carried out for Italy as well, the contagion parameter for
the Italian market was in that period almost indistinguishable
from that of the French one.\bigskip

\begin{figure}[h]
\begin{center}
\includegraphics[width=5cm, height=8cm,angle=-90]{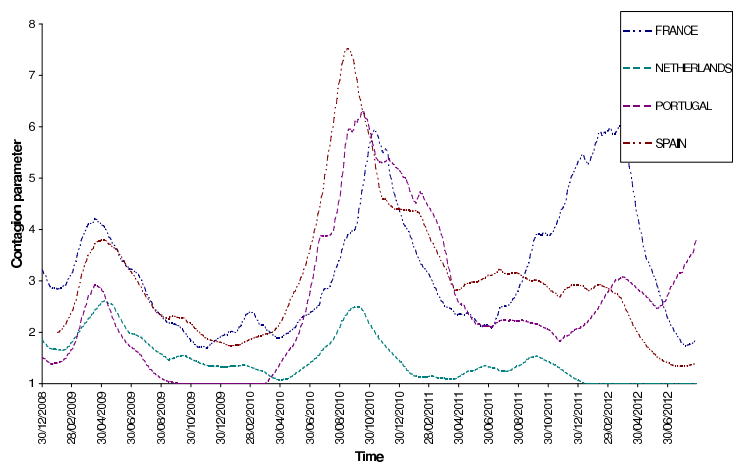}
\vskip 1cm
\caption{The dynamics of the contagion parameter.}\label{fig3}
\end{center}
\end{figure}\bigskip

\bigskip

\bigskip

\section{Conclusions and future extensions}\label{conclusion}
In this paper we presented a model that includes both systemic risk and contagion. Systemic risk is represented by the presence of a shock that brings about the default of all the elements in a cluster. Contagion is represented by the links between the idiosyncratic shocks specific to each component and the systemic shock.\\
On theoretical grounds, the analysis can be carried out assuming
whatever dependence structure among the non observed components
representing the shocks.  On empirical grounds, here we provide an
estimation procedure for a model in which the dependence structure
of the unobserved components is Archimedean and exchangeable.
Moreover, we provide a technique to verify whether the
specification proposed fits the data. We also show that, including
further restrictions may transform the copula model in a new
full-fledged multivariate model with exponential marginal
distributions. \\
Given a panel data of observations of marginal intensities, the
estimation of the model is carried out on the set of bivariate
dependence statistics. Based on estimates, one can extract the
time series of the systemic shock, estimate the Kendall's tau's of
the observed marginals and the systemic shock, and verify whether
they are aligned on a straight line, as predicted by the model. We
apply this technique to a set of European banks of 8 countries,
assuming a systemic shock at the country level, and we found that
our model turns out to be well specified for 4 countries: Spain,
Portugal, France and The Netherlands. For these countries, we also
report an analysis of the dynamics of the contagion parameter,
providing empirical
evidence of co-movement in periods of international crisis.\\
Of course, the next step of this line of research would call for
estimation of the more general, non-exchangeable setting, that has
been also formalized in this paper. More precisely, we see three
main promising fields of development
\begin{itemize}
    \item Estimating the dependence structure of the unobserved
    components directly. Most likely, this would involve the
    application of Simulated Maximum Likelihood (SML) or similar
    techniques, in which one tries to estimate the parameters by
    simulating data as close as possible to the observed ones.
    Doing this can be very easy or very complex, depending on the
    degree of generality that one is willing to accept. As the
    simplest case, assume one could consistently estimate the
    parameters $\alpha_k$ in our model. In this case, it would
    suffice to estimate the systemic and the idiosyncratic
    components from the data and study the dependence analysis on
    those. Exploiting the invariance property of copulas, one
    could directly obtain a consistent estimate of the contagion
    parameters. As the most complex case, assume that the dependence
    structure of the unobserved components must be handled in full
    generality. In that case, the concept itself of the $\alpha_k$
    parameters would be lost, since there is no guarantee that the
    same
    proportionality between the systemic shock intensity and the
    marginal intensity is maintained through the sample.
    \item In a similar line of research, one could also decide
    whether to focus on the full specification of the
    model, or only in the relationship between the systemic shock
    and the marginals. In the latter case, the dependence
    structure among idiosyncratic components would play the role
    of nuisance parameters. For example, our findings of
    exchangeable contagion could be consistent with a dependence
    structure in which some degree of non-exchangeability is
    present, but it is limited to the idiosyncratic shock
    dependence. Within this framework, estimation techniques such
    as those envisaged above could be used to devise formal tests
    of the weaker concept of exchangeability discussed in the extension of our model, in which only the
    pairwise dependence between the idiosyncratic shock and the
    systemic ones are required to have the same copula.
    \item Finally, on a different line of research, one could use
    the estimation procedure applied in this paper as an
    exploratory tool to identify clusters of components that may
    constitute the same "exchangeable systemic contagion cluster".
    This could be done evaluating the dependence between new
    element and the systemic shock representing a cluster. Or it
    can be obtained by measuring the dependence between the
    systemic shocks of different clusters to evaluate if some of
    them can be merged in a single one.
\end{itemize}
As for our specific application to the banking system, of course,
the main challenge would be to extend the analysis to the new
unified European banking system, represented by the 130 banks that
are since now on under the supervision of the European Central
Bank.

\section{Appendix}\label{appendix}
Proof of Proposition \ref{Mulinacci}
\begin{proof}
By (\ref{marginal}), $t_k=H_{0,k}^{-1}\circ \psi^{-1}(u_k)$. Hence
$$C({\emph{\bf u}})=\psi\left (\psi^{-1}(\bar F_0\left (\max_{1\leq k\leq d}\{ H_{0,k}^{-1}\circ \psi^{-1}(u_k)\}\right ))+
\sum_{k=1}^d\psi^{-1}\circ\bar F_k\circ H_{0,k}^{-1}\circ
\psi^{-1}(u_k)\right )$$ Let
$$A_j=\left\{{\emph{\bf u}}\in[0,1]^d:\max_{1\leq i\leq d}\{ H_{0,i}^{-1}\circ \psi^{-1}(u_i)\}=
H_{0,j}^{-1}\circ \psi^{-1}(u_j)\right \}$$ then
$$\begin{aligned}\hat C({\emph{\bf u}}){\bf 1}_{A_j}({\emph{\bf u}})&=\psi\left (\psi^{-1}(\bar F_0\left ( H_{0,j}^{-1}\circ \psi^{-1}(u_j)\right ))+
\sum_{k=1}^d\psi^{-1}\circ\bar F_k\circ H_{0,k}^{-1}\circ \psi^{-1}(u_k)\right )=\\
&=\psi\left (\psi^{-1}(u_j)+ \sum_{k=1,k\neq j}^d\psi^{-1}\circ
\bar F_k\circ H_{0,k}^{-1}\circ \psi^{-1}(u_k)\right )
\end{aligned}$$
\end{proof}
\bigskip

Proof of Theorem \ref{Teo1}
\begin{proof}
In the sequel we set $\partial _1C(u,v)=\frac{\partial}{\partial u}C(u,v)$ and
$\partial _2C(u,v)=\frac{\partial}{\partial v}C(u,v)$.
\medskip

We want to compute the $C$-measure of the set
$$S_t=\{(u,v)\in [0,1]^2:C(u,v)\leq t\}.$$
Notice that the level curve $C(u,v)=t$ intersects the graph of the function $v=h(u)$ in a unique point that we denote with $(u_t,v_t)$. Hence $S_t$ can be decomposed as $S_t=R_t+ R_{1,t}+R_{2,t}$
where $R_t=[0,u_t]\times [0,v_t]$, $R_{1,t}=\{(u,v): v\in (v_t,1], C(u,v)\leq t\}$ and $R_{2,t}=\{(u,v):u\in (u_t,1], C(u,v)\leq t\}$.\\
Clearly, the $C$-measure of $R_t$ is $t$. In order to compute the $C$-measure of $R_{1,t}$ and $R_{2,t}$, we compute $u_t$ and $v_t$.
Since $(u_t,v_t)$ satisfies $\psi_\phi\left (\psi_\phi^{-1}(u_t)+D_{ki}\circ\psi_\phi^{-1}(v_t)\right )=t$ and $v_t=h(u_t)$, we get
$$\psi_\phi^{-1}\circ\psi_\theta\circ D_{ij}^{-1}\circ  D_{ik}\circ \psi_\phi^{-1}(v_t)+D_{ki}\circ \psi_\phi^{-1}(v_t)=\psi_\phi^{-1}(t)$$
from which
$$v_t=\psi_\phi\circ G^{-1}\circ\psi_\phi^{-1}(t)$$
and
$$u_t=\psi_\theta\circ D_{ij}^{-1}\circ D_{ik}\circ G^{-1}\circ\psi_\phi^{-1}(t).$$
Let us start with $R_{1,t}$. Notice that here, $C(u,v)\leq t$ is equivalent to $u\leq F_1(t,v)$ where $F_1(t,v)=\psi_\phi\left (\psi_\phi^{-1}(t)-D_{ki}\circ\psi_\phi^{-1}(v)\right )$. Hence
$$\begin{aligned}
\mathbb P(R_{1,t})&=\int_{v_t}^1\mathbb P(U\leq F_1(t,v)\vert V=v)dv=\\
&=\int_{v_t}^1\partial _2C(F_1(t,v),v)dv=\\
&=\int_{v_t}^1\psi_\phi^\prime\circ\psi_\phi^{-1}(t)\cdot\frac{d}{dv}D_{ki}\circ\psi_\phi^{-1}(v)dv=\\
&=-\psi_\phi^\prime\circ\psi_\phi^{-1}(t)\cdot D_{ki}\circ\psi_\phi^{-1}(v_t).
\end{aligned}$$
Let us now consider $R_{2,t}$.
Notice that here, the inequality $C(u,v)\leq t$, is equivalent to $u\leq F_2(t,v)$ where $$F_2(t,v)=\psi_\theta\circ D_{ji}^{-1}
\left (\rho^{-1}\left (\psi_\phi^{-1}(t)-D_{ki}\circ\psi_\phi^{-1}(v)\right )- D_{ik}\circ \psi_\phi^{-1}(v)\right ).$$
But $$R_{2,t}=\{(u,v):u_t<u\leq 1, t<v, C(u,v)\leq t\}\cup \{(u,v):u_t<u\leq 1,v\leq t\}$$
and
$$\begin{aligned}&\mathbb P\left ( u_t<U\leq 1,V\leq t\right )=t-C(u_t,t)=\\
&=\mathbb P\left (U\leq u_t,t<V\leq v_t\right ).\end{aligned}$$
Hence
$$\begin{aligned}
&\mathbb P(R_{2,t})=\int_t^{v_t}\mathbb P(U\leq F_2(t,v)\vert V=v)dv=\\
&=\int_t^{v_t}\partial _2C(F_2(t,v),v)dv=\\
&=\int_{t}^{v_t}
\psi_\phi^\prime\circ\psi_\phi^{-1}(t)\{\rho ^\prime\circ\rho ^{-1}\left (
\psi_\phi^{-1}(t)-D_{ki}\circ\psi_\phi^{-1}(v)\right )
\frac{d}{dv} D_{ik}\circ\psi_\phi^{-1}(v)+\frac{d}{dv}D_{ki}\circ\psi_\phi^{-1}(v)\}dv=\\
&=\psi_\phi^\prime\circ\psi_\phi^{-1}(t)\left\{
\int _{ D_{ik}\circ\psi_\phi^{-1}(t)}^{D_{ik}\circ\psi_\phi^{-1}(v_t)}\rho^\prime\circ\rho^{-1}(\psi_\phi^{-1}(t)-D_{ki}\circ D_{ik}^{-1}(z))dz+\right .\\
&\left .+D_{ki}\circ\psi_\phi^{-1}(v_t)-D_{ki}\circ\psi_\phi^{-1}(t)\right\}.
\end{aligned}$$
From $\mathbb P(S_t)=t+\mathbb P(R_{1,t})+\mathbb P(R_{2,t})$ we get (\ref{kendall1}).
\medskip

As a consequence, the Kendall's tau is
$$\begin{aligned}\tau&=3-4\int_0^1\mathcal K(t)dt=\\
&=3-4\int_0^1\left\{t-\psi_\phi^\prime\circ \psi_\phi^{-1}(t)\cdot \right .\\
&\left .\cdot \left [D_{ki}\circ \psi_\phi^{-1}(t)-\int _{ D_{ik}\circ\psi_\phi^{-1}(t)}^{ D_{ik}\circ G^{-1}\circ\psi_\phi^{-1}(t)}\rho^\prime\circ\rho^{-1}(\psi_\phi^{-1}(t)-D_{ki}\circ D_{ik}^{-1}(z))dz\right ]\right\}dt=\\
&=1+4\int_0^1\psi_\phi^\prime\circ \psi_\phi^{-1}(t)\cdot D_{ki}\circ \psi_\phi^{-1}(t)dt-\\
&-4\int_0^1\psi_\phi^\prime\circ \psi_\phi^{-1}(t)\int _{ D_{ik}\circ\psi_\phi^{-1}(t)}
^{ D_{ik}\circ G^{-1}\circ \psi_\phi^{-1}(t)}\rho^\prime\circ\rho^{-1}(\psi_\phi^{-1}(t)-D_{ki}\circ D_{ik}^{-1}(z))dzdt.
\end{aligned}$$
\end{proof}
\bigskip

Proof of Theorem \ref{Teo2}
\begin{proof} In the sequel we set $\partial _1C(u,v)=\frac{\partial}{\partial u}C(u,v)$ and
$\partial _2C(u,v)=\frac{\partial}{\partial v}C(u,v)$.
\medskip

The proof is similar to the one of Theorem \ref{Teo1}.\medskip

Again we decompose the set
$S_t=\{(u,v)\in [0,1]^2:C(u,v)\leq t\}$
as $S_t=R_t+ R_{1,t}+R_{2,t}$
where, if $(u_t,v_t)$ is the intersection point of the curves $C(u,v)=t$ and $v=h(u)$, $R_t=[0,u_t]\times [0,v_t]$, $R_{1,t}=\{(u,v):u\in (u_t,1], C(u,v)\leq t\}$ and $R_{2,t}=\{(u,v): v\in (v_t,1], C(u,v)\leq t\}$.\\
Clearly, the $C$-measure of $R_t$ is $t$.
In order to compute the $C$-measure of $R_{1,t}$ and $R_{2,t}$, we compute $u_t$ and $v_t$.
Since
$$\psi_\phi\left [\rho\left (\rho^{-1}\circ D_{ik}\circ\psi_\phi^{-1}(u_t)+\rho^{-1}\circ D_{jk}\circ\psi_\phi^{-1}(v_t)\right )+ D_{kj}\circ\psi_\phi^{-1}(v_t)\right ]=t$$ and $v_t=h(u_t)$, we get
$$v_t=\psi_\phi\circ H_{0,j}\circ G^{-1}\circ\psi_\phi^{-1}(t)$$
and
$$u_t=\psi_\phi\circ H_{0,i}\circ G^{-1}\circ\psi_\phi^{-1}(t)$$
where $G$ is given by (\ref{g2}).\\
Let us start with $R_{1,t}$.
Notice that here, $C(u,v)\leq t$ is equivalent to $u\leq F_1(t,v)$ where $$F_1(t,v)=\psi_\phi\circ D_{ik}^{-1}\circ\rho \left (\rho^{-1}\left (\psi_\phi^{-1}(t)- D_{kj}\circ\psi_\phi^{-1}(v)\right )-\rho^{-1}\circ D_{jk}\circ \psi_\phi^{-1}(v)\right ).$$
By similar arguments as those used in the proof of Theorem \ref{Teo1}, we have
$$\begin{aligned}
&\mathbb P(R_{1,t})=\int_t^{v_t}\mathbb P(U\leq F_1(t,v)\vert V=v)dv=\\
&=\int_t^{v_t}\partial _2C(F_1(t,v),v)dv=\\
&=\int_t^{v_t}
\psi_\phi^\prime\circ\psi_\phi^{-1}(t)\left\{\rho ^\prime\circ\rho ^{-1}\left (
\psi_\phi^{-1}(t)- D_{kj}\circ\psi_\phi^{-1}(v)\right )
\frac{d}{dv}\rho^{-1}\circ D_{jk}\circ\psi_\phi^{-1}(v)+\right .\\
&\left .+\frac{d}{dv} D_{kj}\circ\psi_\phi^{-1}(v)\right\}dv=\\
&=\psi_\phi^\prime\circ\psi_\phi^{-1}(t)\left\{
\int _{\rho^{-1}\circ D_{jk}\circ\psi_\phi^{-1}(t)}^{\rho^{-1}\circ  D_{jk}\circ\psi_\phi^{-1}(v_t)}
\rho^\prime\circ\rho^{-1}(\psi_\phi^{-1}(t)-\psi_\phi^{-1}\circ\bar F_k\circ\bar F_j^{-1}\circ\psi_\phi\circ\rho(z))dz+\right .\\
&\left .+ D_{kj}\circ\psi_\phi^{-1}(v_t)- D_{kj}\circ\psi_\phi^{-1}(t)\right\}.
\end{aligned}$$
Substituting $v_t$ we get
$$\begin{aligned}
&\mathbb P(R_{1,t})=\\
&=\psi_\phi^\prime\circ\psi_\phi^{-1}(t)\left\{
\int _{\rho^{-1}\circ D_{jk}\circ\psi_\phi^{-1}(t)}^{\rho^{-1}\circ \psi_\phi^{-1}\circ\bar F_j
\circ G^{-1}\psi_\phi^{-1}(t)}\rho^\prime\circ\rho^{-1}(\psi_\phi^{-1}(t)-\psi_\phi^{-1}\circ\bar F_k
\circ\bar F_j^{-1}\circ\psi_\phi\circ\rho(z))dz+\right .\\
&\left .+\psi_\phi^{-1}\circ\bar F_k\circ G^{-1}\circ\psi_\phi^{-1}(t)- D_{kj}\circ\psi_\phi^{-1}(t)\right\}.
\end{aligned}$$
With similar computations we get
$$\begin{aligned}
&\mathbb P(R_{2,t})=\\
&=\psi_\phi^\prime\circ\psi_\phi^{-1}(t)\left\{
\int _{\rho^{-1}\circ D_{ik}\circ\psi_\phi^{-1}(t)}^{\rho^{-1}\circ  \psi_\phi^{-1}\circ\bar F_i
\circ G^{-1}\psi_\phi^{-1}(t)}\rho^\prime\circ\rho^{-1}(\psi_\phi^{-1}(t)-\psi_\phi^{-1}\circ\bar F_k
\circ\bar F_i^{-1}\circ\psi_\phi\circ\rho(z))dz+\right .\\
&\left .+\psi_\phi^{-1}\circ\bar F_k\circ G^{-1}\circ\psi_\phi^{-1}(t)- D_{ki}\circ\psi_\phi^{-1}(t)\right\}.
\end{aligned}$$
From $\mathbb P(S_t)=t+\mathbb P(R_{1,t})+\mathbb P(R_{2,t})$ we get (\ref{kendall2}).
\medskip

As a consequence, the Kendall's tau is
$$\begin{aligned}\tau&=3-4\int_0^1\mathcal K(t)dt=\\
&=1-4\int_0^1\psi_\phi^\prime\circ \psi_\phi^{-1}(t)\cdot \\
&\cdot\left [
\int_{\rho^{-1}\circ D_{jk}\circ\psi_\phi^{-1}(t)}^{\rho^{-1}\circ \psi_\phi^{-1}\circ\bar F_j\circ G^{-1}\circ\psi_\phi^{-1}(t)}\rho^\prime\circ\rho^{-1}\left\{\psi_\phi^{-1}(t)-\psi_\phi^{-1}\circ\bar F_k
\circ\bar  F_j^{-1}\circ\psi_\phi\circ\rho(z)\right\}dz+\right .\\
&\left .+\int_{\rho^{-1}\circ D_{ik}\circ\psi_\phi^{-1}(t)}^{\rho^{-1}\circ \psi_\phi^{-1}\circ\bar F_i\circ G^{-1}\circ\psi_\phi^{-1}(t)}\rho^\prime\circ\rho^{-1}\left\{\psi_\phi^{-1}(t)-\psi_\phi^{-1}\circ\bar F_k
\circ\bar F_i^{-1}\circ\psi_\phi\circ\rho(z)\right\}dz\right ]dt+\\
&-4\int_0^1\psi_\phi^\prime\circ \psi_\phi^{-1}(t)\cdot (2\psi_\phi^{-1}\circ\bar F_k\circ G^{-1}-( D_{kj}+ D_{ki}))\circ \psi_\phi^{-1}(t)dt.
\end{aligned}$$
\end{proof}

\end{document}